\DeclarePairedDelimiter\floor{\lfloor}{\rfloor}
\DeclarePairedDelimiter{\length}{\lvert}{\rvert}
\newtheorem{theorem}{Theorem}[section]
\newtheorem{lemma}[theorem]{Lemma}
\title{Constructing Words with High Distinct Square Densities}
\author{F. Blanchet-Sadri \qquad\qquad S. Osborne
\institute{Department of Computer Science\\
University of North Carolina\\
P.O. Box 26170\\
Greensboro, North Carolina 27402--6170, USA}
\email{blanchet@uncg.edu \qquad\qquad shosborn@uncg.edu}
}
\begin{document}
\maketitle

\begin{abstract}
 Fraenkel and Simpson showed that the number of distinct squares in a word of length $n$ is bounded from above by $2n$, since at most two distinct squares have their rightmost, or last, occurrence begin at each position. Improvements by Ilie to $2n - \Theta(\log n)$ and by Deza et al. to $\lfloor 11n/6 \rfloor$ rely on the study of combinatorics of FS-double-squares, when the maximum number of two last occurrences of squares begin. In this paper, we first study how to maximize runs of FS-double-squares in the prefix of a word. We show that for a given positive integer $m$, the minimum length of a word beginning with $m$ FS-double-squares, whose lengths are equal, is $7m+3$. We construct such a word and analyze its distinct-square-sequence as well as its distinct-square-density. We then generalize our construction. We also construct words with high distinct-square-densities that approach $5/6$. 
\end{abstract}

\section{Introduction}

Computing repetitions in strings of letters from a finite alphabet is profoundly connected to numerous fields such as mathematics, computer science, and biology, where the data can be easily represented as words over some alphabet, and finds important practical uses in several research areas, notably in text compression, string searching and pattern matching \cite{CrHaLe,CrRy03}, cryptography, music, natural language processing \cite{Lot05}, and computational biology \cite{Gusbook,Pevbook}. Several pattern matching algorithms take advantage of the repetitions of the pattern to speed up the search of its occurrences in a text \cite{CrPe91,CrRy95} and algorithms for text compression are often based on the study of repetitions in strings \cite{LeZi}. We refer the reader to \cite{CrIlRy} for a survey on algorithms and combinatorics of repetitions in strings.  

There is a vast literature dealing with {\em squares}, which are repetitions of the form $xx$. This is due to their fundamental importance in algorithms and combinatorics on words. Different notions and techniques such as primitively or non-primitively-rooted squares \cite{DeFr,KRRW13}, positions starting a square \cite{HaKaNo}, frequencies of occurrences of squares \cite{KuOcRa,OcRa}, three-squares property \cite{FaPuSmTu,KoSm}, overlapping squares \cite{FFSS}, distinct squares \cite{DeFrTh,FrSi95,FrSi98,Ili05,Ili07,JoMaSe,MaSe}, double squares \cite{DeFrTh}, non-standard squares \cite{KoRaRyWa}, etc., have been studied and extended to partial words \cite{BSBoNiQuZh,BSJiMaQuZh,BSLaNiQuZh,BSMe1,BSMe2,BSMeSc2,BSNiQuZhConf,HaHaKa092}.

Various questions on squares have received a lot of attention. Among them is Fraenkel and Simpson's long-standing question ``How many distinct squares are there in a word of length $n$?'', where each square is counted only once \cite{FrSi95}. Fraenkel and Simpson \cite{FrSi98} showed in 1998 that the maximum number of distinct squares in such a word is asymptotically bounded from below by $n-o(n)$, and bounded from above by $2n$, since at each position of a word of length $n$ at most two distinct squares have their rightmost, or last, occurrence begin.  They conjectured that this maximum number is at most $n$. This work became the motivation for linear-time algorithms that find all repetitions in a string, encoded into maximal repetitions \cite{ApPr,Cro81,CrIlKuRaRyWa,GuSt,KoKu00,MaLo}. 

In 2005, Ilie \cite{Ili05} gave a simpler proof of the $2n$ upper bound and he \cite{Ili07} improved it to $2n-\Theta(\log n)$ in 2007. More recently, Deza et al. \cite{DeFrTh} improved the upper bound further to $\lfloor 11n/6 \rfloor$. Both Ilie's and Deza et al.'s improvements rely on the study of the combinatorics of {\em FS-double-square-positions}, i.e., positions at which two last occurrences of squares begin, which is the maximum number of occurrences possible. Let $s_i$ denote the number of distinct squares whose last occurrence in a word $w$ of length $n$ begins at position $i$, and let the distinct-square-sequence $s(w)$ be the word $s_1 s_2 \cdots s_n$. Then the result of Fraenkel and Simpson implies that $s_i \in \{0, 1, 2\}$. A position $i$ with $s_i=2$ is an FS-double-square-position. 

In this paper, we consider the problem of counting distinct squares in a word $w$ of length $n$. In particular, we study consecutive $2$'s in the sequence $s(w)$, called {\em runs of 2's}. We also construct words that have a high distinct-square-density, that is, the ratio of the number of distinct squares to length is high.   

The contents of our paper are as follows. 
In Section~2, we review some basic definitions and notations that we use throughout the paper. We also discuss some preliminary results on double-squares.  
In Section~3, we study runs of double-square-positions and we focus on maximizing runs of FS-double-squares. We first recall two results of Ilie~\cite{Ili07}; one gives a relation between the lengths of squares having their last occurrence at positions neighboring an FS-double-square-position and the other one considers the case when the lengths of squares in a run of 2's are equal. It follows from the latter that for a given $m$, the minimum length of a word beginning with $m$ FS-double-squares, whose lengths are preserved, is $7m+3$. We show its existence by constructing one, i.e., we construct a word $w_m$ of length $7m+3$ beginning with $m$ FS-double-squares, whose lengths are preserved, and analyze the distinct-square-sequence $s(w_m)$ as well as the distinct-square-density of $w_m$. We then generalize our construction.
In Section~4,  we construct words in which the distinct-square-density approaches $5/6$. These words do not have many FS-double-squares, and those they do have are not at the beginning.  The majority of distinct squares in these words are the only distinct squares at a particular position. All our constructions in Sections~3 and 4 are such that  each run of 2's in the corresponding distinct-square-sequence is followed by a run of at least twice as many 0's.  We refer to such a run of 2's as {\em selfish 2's}.
In Section~5, we discuss ways to break the selfish rule, e.g., omit or alter the last letter of the word $w_m$.
Finally in Section~6, we conclude with some remarks and suggestions for future work.

\section{Preliminaries}

We refer the reader to the book \cite{Lot97} for some basic concepts in combinatorics on words. We also adopt some of the terminology of \cite{DeFrTh,FrSi98,Ili07} on squares. For integers $i, j$ such that $i \leq j$, the notation $[i..j]$ denotes the discrete interval consisting of the integers $\{i, i+1, \ldots, j\}$.

Let $A$ be an alphabet with size denoted by $|A|$; we assume throughout the paper that $|A| \geq 2$. A {\em word} $w$ over $A$ is a sequence $a_1 \cdots a_n$, where $a_i$ is the letter in $A$ that occurs at position~$i$ of $w$; we also let $w[i]$ denote the letter at position~$i$. The integer $n$ is the {\em length} of $w$, denoted by $\length{w}$. The {\em empty word}, denoted by $\varepsilon$, is the word of length zero. It acts as the identity under the concatenation of words, so the set of all words over $A$, denoted by $A^*$, becomes a monoid. 

If $w = xy$, then $x$ is a {\em prefix} of $w$, denoted by $x \leq w$; when $x \not = w$, we say that $x$ is a {\em proper prefix} of $w$, denoted by $x < w$. If $w = x y z$, then $y$ is a {\em factor} of $w$ and $z$ is a {\em suffix} of $w$; here $y$ is an {\em interior factor} of $w$ if $x \not = \varepsilon$ and $z \not = \varepsilon$.  For $1 \leq i \leq j \leq |w|$, the notation $w[i..j]$ refers to the factor $w[i] w[i+1] \cdots w[j]$.

A word $w$ is {\em primitive} if it cannot be written as a non-trivial power $v^e$, i.e., the concatenation of $e$ copies of a word $v$ where $e$ is an integer greater than 1. It is well-known that this is equivalent to saying that $w$ is not an interior factor of $ww$.

A {\em square} in a word consists of a factor of the form $x^2=xx$, where $x \not = \varepsilon$.  A {\em double-square} is a pair $(u,U)$ such that $u^2$ and $U^2$ are two squares that begin at the same position with $\length{u}<\length{U}$. An {\em FS-double-square} is a double-square $(u,U)$ positioned such that both $u^2$ and $U^2$ are last occurrences; the FS notation refers to work of Fraenkel and Simpson on double-squares.  Note that all FS-double-squares $(u,U)$ are such that $\length{u}<\length{U}<2\length{u}$. A {\em double-square-position} is a position at which a double-square begins, and likewise for FS-double-squares. The structure of an FS-double-square follows Lemma~\ref{Lem Deza 6} since the third condition is always true, by definition.

\begin{lemma}\cite{DeFrTh}
	\label{Lem Deza 6}
	Let $(u, U)$ be a double-square such that $|u| < |U| < 2|u|$.  If one of the conditions $1.$ $u$ is primitive, $2.$ $U$ is primitive, or $3.$ $u^2$ has no further occurrence in $U^2$ is satisfied, then there is a unique primitive word $v_1$, a unique non-empty proper prefix $v_2$ of $v_1$, and unique integers $e_1$ and $e_2$ satisfying $1 \leq e_2 \leq e_1$ such that $u=v_1^{e_1}v_2$ and $U=v_1^{e_1}v_2v_1^{e_2}$. Moreover, $U$ is primitive.
\end{lemma}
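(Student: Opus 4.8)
The plan is to read the structure off the prefix relations the configuration forces, and then to show that each of the three hypotheses rules out the single degenerate case in which the structure would fail. To begin, I would record the reductions: since $u^2$ and $U^2$ begin at the same position and $\length{u} < \length{U} < 2\length{u}$, we have the chain of prefixes $u < U < u^2$ and also $u^2 < U^2$. Write $U = us$ with $0 < \length{s} =: r = \length{U} - \length{u} < \length{u}$. Comparing the length-$2\length{u}$ prefix of $U^2 = usus$ against $u^2 = uu$ letter by letter, the second copy of $u$ must equal $s$ followed by the prefix of $u$ of length $\length{u} - r$; that is, $u = s\, u[1 .. \length{u}-r]$, which says precisely that $u$ has period $r$ and that $s = u[1..r]$. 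In particular $u$ is a prefix of every sufficiently long power of $s$.

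Next I would extract the common ``brick''. Let $v_1$ be the primitive root of $s$, so $v_1$ is primitive, $s = v_1^{e_2}$ for a unique integer $e_2 \geq 1$, and $r = e_2\length{v_1}$. Since $u$ is a prefix of a sufficiently high power of $v_1$, dividing $\length{u}$ by $\length{v_1}$ writes $u = v_1^{e_1} v_2$, where $v_2$ is the prefix of $v_1$ of length $\length{u} \bmod \length{v_1}$ (hence automatically a proper prefix of $v_1$) and $e_1 = \floor{\length{u}/\length{v_1}} \geq 1$, the latter because $\length{v_1} \leq r < \length{u}$. Then $U = us = v_1^{e_1} v_2 v_1^{e_2}$, as required. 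It remains to prove $v_2 \neq \varepsilon$, that $e_2 \leq e_1$, and that $U$ is primitive. The first two statements both reduce to the single claim $\length{v_1} \nmid \length{u}$: if $\length{v_1} \mid \length{u}$ then $v_2 = \varepsilon$, while if $\length{v_1} \nmid \length{u}$ then $e_2 = r/\length{v_1} < \length{u}/\length{v_1}$ forces the integer $e_2 \leq \floor{\length{u}/\length{v_1}} = e_1$.

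The heart of the argument, and where I expect the real work to lie, is showing that the hypothesis forbids $\length{v_1} \mid \length{u}$ and forces $U$ to be primitive. The key observation is that $\length{v_1} \mid \length{u}$ is equivalent to $u = v_1^{e_1}$ with $e_1 \geq 2$; together with $s = v_1^{e_2}$ this yields $U = v_1^{e_1 + e_2}$ with $e_1 + e_2 \geq 3$, and then $u^2 = v_1^{2e_1}$ reoccurs in $U^2 = v_1^{2(e_1+e_2)}$ at position $\length{v_1} + 1$; so in this case $u$ is imprimitive, $U$ is imprimitive, and $u^2$ has a further occurrence in $U^2$, meaning all three hypotheses fail --- a contradiction. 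Hence $v_2 \neq \varepsilon$ and $1 \leq e_2 \leq e_1$. For the primitivity of $U$: if $U = \rho^k$ with $k \geq 2$ and $\rho$ primitive, then $\length{\rho} \leq \length{U}/2 < \length{u}$, and the word $u^2 \leq U^2 = \rho^{2k}$ of length $2\length{u}$ has periods $\length{u}$ and $\length{\rho}$ with $2\length{u} \geq \length{u} + \length{\rho} - \gcd(\length{u}, \length{\rho})$, so the theorem of Fine and Wilf gives that $u^2$ has period $g := \gcd(\length{u}, \length{\rho})$; since $\rho$ is a prefix of $u^2$ and $g \mid \length{\rho}$, primitivity of $\rho$ forces $g = \length{\rho}$, hence $\length{\rho} \mid \length{u}$, $u = \rho^{\length{u}/\length{\rho}}$, $s = \rho^{r/\length{\rho}}$, and $U = \rho^{\length{U}/\length{\rho}}$ --- again the degenerate situation already excluded. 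The Fine--Wilf bookkeeping that pins down exactly when everything collapses to powers of a single brick is, I expect, the main obstacle; the rest is routine manipulation of periods.

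Uniqueness is then cheap once the structure is in place: from any admissible tuple $(v_1, v_2, e_1, e_2)$ the suffix of $U$ of length $\length{U} - \length{u}$ equals $v_1^{e_2}$, so $v_1$ must be the primitive root of $s = U[\length{u}+1 .. \length{U}]$ and is therefore uniquely determined; then $e_2 = r/\length{v_1}$, and the equation $u = v_1^{e_1} v_2$ with $v_2$ a nonempty proper prefix of $v_1$ determines $e_1$ and $v_2$ by uniqueness of division with nonzero remainder.
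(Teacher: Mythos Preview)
The paper does not give its own proof of this lemma: it is stated with a citation to Deza, Franek, and Thierry~\cite{DeFrTh} and used as a black box. So there is nothing in the present paper to compare your argument against.

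That said, your proof is correct and is essentially the standard derivation one would expect (and matches the spirit of the argument in~\cite{DeFrTh}). You correctly extract the period $r=\length{U}-\length{u}$ of $u$ from the prefix relations, take $v_1$ to be the primitive root of $s=U[\length{u}+1..\length{U}]$, and reduce both the nonemptiness of $v_2$ and the inequality $e_2\le e_1$ to the single nondivisibility condition $\length{v_1}\nmid\length{u}$. Your case analysis showing that $\length{v_1}\mid\length{u}$ makes all three hypotheses fail simultaneously is clean, and the Fine--Wilf step for the primitivity of $U$ is carried out properly (the key point, which you handle, is that $\length{\rho}\le\length{U}/2<\length{u}$ so Fine--Wilf applies on $u^2$). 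Uniqueness via the primitive root of the suffix $s$ is also the right way to close.

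One very minor stylistic remark: when you conclude the primitivity argument by saying ``again the degenerate situation already excluded,'' you are implicitly using that the primitive root of $s$ equals $\rho$; you do say this, but it is worth flagging as the link back to the earlier case, since it is what makes the contradiction with the three hypotheses fire. Otherwise the write-up is complete.
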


Note that by Lemma~\ref{Lem Deza 6}, $U^2$ has the form $v_1^{e_1}v_2v_1^{e_2}v_1^{e_1}v_2v_1^{e_2}$.  Any FS-double-square $(u,U)$ such that $|u| < |U| < 2|u|$ can therefore be fully defined by giving values to the terms $v_1, v_2, e_1$, and $e_2$.  When defining a word in such a manner, we will list the terms in the order just given; e.g., $(ba, b, 2, 1)=bababbabababba$. 

The converse of Lemma~\ref{Lem Deza 6} is nearly true as well, as the next lemma shows.

\begin{lemma}
	\label{Lem Deza converse}
	For any word $w$ that can be written as $w=(v_1^{e_1}v_2v_1^{e_2})^2$, where $v_1$ is primitive, $v_2$ is a proper non-empty prefix of $v_1$, and $e_1$ and $e_2$ are integers such that $1 \leq e_2 \leq e_1$, the following hold: $1.$ $U=v_1^{e_1}v_2v_1^{e_2}$ is primitive, $2.$ 
 $u^2=(v_1^{e_1}v_2)^2$ occurs at position $1$ of $w$ and has no further occurrence in $w$, and $3.$ $(u,U)$ is an FS-double-square such that $|u| < |U| < 2|u|$.
		\end{lemma}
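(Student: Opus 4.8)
The plan is to write $p=\length{v_1}$, $q=\length{v_2}$ (so $0<q<p$), $v_1=v_2v_3$ with $v_3$ non-empty, and to set $u=v_1^{e_1}v_2$, $U=v_1^{e_1}v_2v_1^{e_2}$, $L=\length{u}$, $N=\length{U}$, so that $w=U^2$ has length $2N$. The length inequalities asserted in part $3.$ are arithmetic: $N-L=e_2p\geq p>0$ gives $\length{u}<\length{U}$, and $2L-N=(e_1-e_2)p+q\geq q>0$ gives $\length{U}<2\length{u}$. I shall use $N<2L$ repeatedly below.

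For part $2.$ I would first check $u^2\leq w$ directly. Writing $w=v_1^{e_1}v_2\,v_1^{e_1+e_2}v_2\,v_1^{e_2}$, its length-$L$ prefix is visibly $u$; for the next $L$ letters it suffices that the length-$L$ prefix of $v_1^{e_1+e_2}v_2v_1^{e_2}$ equals $v_1^{e_1}v_2$, which holds because $e_1+e_2>e_1$ forces its first $e_1p$ letters to be $v_1^{e_1}$ and the next $q<p\leq e_2p$ letters to be the length-$q$ prefix of $v_1$, namely $v_2$. The remaining assertion of part $2.$, that $u^2$ has no further occurrence in $w$, I would deduce from part $1.$ (primitivity of $U$): the string $u^2$ has $w[1..N]=U$ as its length-$N$ prefix, so an occurrence of $u^2$ at a position $i\geq 2$ of $w=U^2$ would force $U$ to occur at position $i$; since $U$ is primitive it is not an interior factor of $U^2$, so $U$ occurs in $U^2$ only at positions $1$ and $N+1$; here $i=1$ is excluded by $i\geq 2$, and $i=N+1$ is excluded since $u^2$ would then end at position $N+2L>2N$, using $N<2L$.

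The crux is part $1.$, that $U=v_1^{e_1}v_2v_1^{e_2}$ is primitive, and I expect this to be the main obstacle. The tempting shortcuts fail: $u=v_1^{e_1}v_2$ itself need not be primitive (for instance $v_1=abaab$, $v_2=a$ yields $u=(aba)^2$), and applying Fine--Wilf to $U$ directly does not close, since the relevant periods cannot be controlled. Instead I would pass to a conjugate: writing $U=\big((v_2v_3)^{e_1}v_2\big)\big((v_2v_3)^{e_2}\big)=PQ$, one has $QP=(v_2v_3)^{e_1+e_2}v_2=v_1^{e_1+e_2}v_2$, so $U$ is conjugate to $v_1^{e_1+e_2}v_2$, and it suffices to show the latter is primitive, primitivity being a conjugacy invariant. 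Put $m=e_1+e_2\geq 2$ and suppose $v_1^mv_2=\rho^k$ with $k\geq 2$. As a prefix of a power of $v_1$, the word $v_1^mv_2$ has period $p$; it also has period $\length{\rho}\leq(mp+q)/2$; and since $m\geq 2$ its length $mp+q$ meets the Fine--Wilf bound $p+\length{\rho}-\gcd(p,\length{\rho})$, so $v_1^mv_2$ has period $g:=\gcd(p,\length{\rho})$. Then its length-$p$ prefix $v_1$ has period $g$, and as $g\mid p$ this forces $g=p$, for otherwise $v_1$ would be a proper power; hence $p\mid\length{\rho}$, and since $\length{\rho}\mid mp+q$ we get $p\mid q$, contradicting $0<q<p$. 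Therefore $v_1^mv_2$, and hence $U$, is primitive.

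Finally I would assemble the pieces: $u^2$ and $U^2$ begin at the same position $1$ of $w$ with $\length{u}<\length{U}$, so $(u,U)$ is a double-square; $U^2=w$ occupies all of $w$ and so is (trivially) a last occurrence, while $u^2$ is a last occurrence by part $2.$, hence $(u,U)$ is an FS-double-square; and $\length{u}<\length{U}<2\length{u}$ by part $3.$. This establishes all three claims.
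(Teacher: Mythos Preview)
Your proof is correct, and it is considerably more self-contained than the paper's. The paper disposes of part~1 by citation (invoking \cite[Lemma~6]{DeFrTh} for the primitivity of $U$), whereas you supply an independent argument: conjugating $U=PQ$ to $QP=v_1^{e_1+e_2}v_2$ and then applying Fine--Wilf to the periods $p$ and $\length{\rho}$ of a putative power decomposition. This is a genuinely different route and has the advantage of making the lemma stand on its own. For part~2 the paper asserts, without real justification, that a second occurrence of $u^2$ would force $v_1$ to be a power of $v_2$; your derivation---observing that $u^2$ has $U$ as a prefix (since $2L>N$), so a shifted occurrence of $u^2$ in $U^2$ would produce an interior occurrence of $U$ in $U^2$, contradicting the primitivity just established---is both cleaner and actually complete. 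The trade-off is length: the paper's proof is a few lines because it outsources the work, while yours spells everything out; but your approach also makes explicit the logical dependence of part~2 on part~1, which the paper's sketch obscures.
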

	\begin{proof}
			Deza et al. \cite[Lemma~6]{DeFrTh} prove that $U$ must be primitive. Since $v_2$ is a prefix of $v_1$, it is clear that $u^2=(v_1^{e_1}v_2)$ occurs at position 1 of $w$.  If $u^2$ occurs anywhere else in $w$, then $v_1$ is a power of $v_2$, in which case $v_1$ is not primitive, $v_1=v_2$, or $v_2$ is the empty word, a contradiction. With $w=U^2$, and $u^2$ occurring exactly once in $w$, at its first position, $w$ meets the definition of an FS-double-square.
		\end{proof}
	
				Referring to Lemma~\ref{Lem Deza converse}, note that $u$ may or may not be primitive. An example of $u$ being primitive is achieved by letting $v_1=a^{m-1}ba$, $v_2=a^{m-1}b$, and $e_2=e_1=1$, where $m$ is an integer greater than or equal to 1. 	An example of $u$ not being primitive is shown by letting $v_1=abaaaba$, $v_2=a$, and $e_2=e_1=1$, giving the non-primitive word $u=abaaabaa$ and $w=abaaabaaabaaabaabaaabaaabaaaba$ has the distinct-square-sequence $200111011101111000011110001000$.

If a word $w$ has length $n$, we let $d(w)$ denote the {\em distinct-square-density} of $w$ equal to $d(w)=s/n$, where $s$ is the number of distinct squares in $w$. We also let $s(w)$ denote the sequence $s_1\cdots s_{n}$ such that each $s_i$ is the number of distinct squares whose last occurrence in $w$ begins at position $i$. As mentioned earlier, $s_i \in \{0, 1, 2\}$. If $s_i = 2$, then $i$ is an FS-double-square-position. In Figure~\ref{exsquares}, the sequence $s(w)=s_1\cdots s_{17}=22000011100110010$ is such that each $s_i$ is the number of distinct squares whose last occurrence in the word $w$ begins at position $i$ (there are 10 distinct squares in that word of length 17). There are two FS-double-square-positions: position~1 with FS-double-square $(abaab,abaababa)$ and position~2 with $(baaba,baababaa)$. There are also six positions with one square whose last occurrence begins: position~7 with $(baa)^2$, position~8 with $(aab)^2$, position~9 with $(aba)^2$, position~12 with $(ab)^2$, position~13 with $(ba)^2$, and position~16 with $a^2$, and all other positions are such that no last occurrence of a square begins.  

\begin{figure}
\begin{center}
$\begin{array}{c|ccccccccccccccccccc}
i & 1 & 2 & 3 & 4 & 5 & 6 & 7 & 8 & 9 & 10 & 11 & 12 & 13 & 14 & 15 & 16 & 17 \\
\hline
w[i] & a & b & a & a & b & a & b & a & a & b & a & a & b & a & b & a & a \\
\hline
s_i & 2 & 2 & 0 & 0 & 0 & 0 & 1 & 1 & 1 & 0 & 0 & 1 & 1 & 0 & 0 & 1 & 0\\ 
\end{array}$
\end{center}
\caption{Word $w$ of length 17 with distinct-square-sequence $s(w)=s_1 \cdots s_{17}$, where $s_i$ is the number of distinct squares whose last occurrence in $w$ begins at position~$i$. Up to a renaming of letters, this is the shortest word that begins with two consecutive double-square-positions.}\label{exsquares}
\end{figure}
	
\section{Runs of double-square-positions}

Ilie \cite{Ili07} gave the following relation between the lengths of squares having their last occurrence at positions neighboring an FS-double-square-position. 

\begin{lemma} [\cite{Ili07}]
	If $(u,U)$ is an FS-double-square beginning at position $i$ and $w^2$ is a square with last occurrence beginning at position $i+1$, then either $\length{w} \in \{\length{u}, \length{U}\}$ or $\length{w} \geq 2\length{u}$. 
\end{lemma}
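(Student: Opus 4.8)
The plan is to normalise so that $i=1$ and apply Lemma~\ref{Lem Deza 6} to the FS-double-square $(u,U)$; its third hypothesis holds automatically, since $u^2$ has its last occurrence at position~$1$ and hence cannot recur inside $U^2$. This yields a primitive word $v_1$, a non-empty proper prefix $v_2$ of $v_1$, and integers $1\le e_2\le e_1$ with $u=v_1^{e_1}v_2$, $U=v_1^{e_1}v_2v_1^{e_2}$, and $U$ primitive, so that $U^2=v_1^{e_1}v_2v_1^{e_2}v_1^{e_1}v_2v_1^{e_2}$. Assuming towards a contradiction that $|w|<2|u|$ and $|w|\notin\{|u|,|U|\}$, and using $|u|<|U|<2|u|$, I am left with the three ranges $|w|<|u|$, $|u|<|w|<|U|$, and $|U|<|w|<2|u|$, which I would handle separately.

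The engine of the argument is that $w^2$ begins at position~$2$, one step into $U^2$, so it is heavily constrained by the internal periodicity of $U^2$. In the first range $w^2$ is a factor of $u^2$, which occupies positions $1$ through $2|u|$; in the second it is a factor of $U^2$; in the third at least the first copy of $w$ is a factor of $u^2$. On a long window carrying both the square-period $|w|$ and one of the periods that $U^2$ supplies --- the period $|u|$ of $u^2$, the period $|U|$ of $U^2$, or the period $|v_1|$ on a block of powers of $v_1$, such as the prefix $u=v_1^{e_1}v_2$ or the central factor $v_1^{e_1+e_2}v_2$ sitting at positions $|u|+1$ through $|u|+|U|$ --- the Fine--Wilf theorem forces a common period equal to their greatest common divisor. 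I would then argue that this divisor is too small to survive: either it is a proper period of $v_1$ dividing $|v_1|$, impossible as $v_1$ is primitive; or a proper period of $U$ dividing $|U|$, impossible as $U$ is primitive; or a period $d$ of $u^2$ short enough that the square $w^2$ at position~$2$ reappears shifted by $d$ and still inside $U^2$, contradicting the assumption that $w^2$ has its last occurrence at position~$2$. In the sub-cases where the Fine--Wilf length bound is not met, or where the relevant window straddles the junction of $u^2$, I would instead read off the required character identities directly from the explicit factorisation of $U^2$, tracking exactly where its period $|v_1|$ first breaks.

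The step I expect to be the main obstacle is the largest range, $|U|<|w|<2|u|$. There $w^2$ overhangs the right end of $U^2$, so one cannot reason purely inside it: only the first copy of $w$ is under control, and it is a factor of $u^2$ of length exceeding $|u|$ that straddles the junction of $u^2$. Here I anticipate having to combine the period $|u|$ of this first copy of $w$ with the periodicity forced by the square $w^2$ and with the constraint $e_2\le e_1$ in order to force $|w|\in\{|u|,|U|\}$ after all. The genuinely delicate part throughout is the positional bookkeeping of how these periods interact near the junctions of $u^2$ and of $w^2$, together with the small-parameter boundary cases --- in particular $e_1=1$, where $u$ may itself be a proper power, and $|v_2|$ close to $|v_1|$ --- in which periodic configurations that are consistent with primitivity of $v_1$ must instead be excluded using the last-occurrence hypotheses.
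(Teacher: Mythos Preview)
The paper does not prove this lemma at all; it is quoted verbatim from Ilie~\cite{Ili07} and used as a black box, so there is no in-paper argument against which to compare your proposal.

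That said, your outline is a reasonable route to an independent proof. Normalising to $i=1$, invoking Lemma~\ref{Lem Deza 6} via its third hypothesis, and then splitting on the three ranges $|w|<|u|$, $|u|<|w|<|U|$, $|U|<|w|<2|u|$ is the natural case structure, and the combination of Fine--Wilf with the primitivity of $v_1$ and $U$ and the last-occurrence hypotheses is the right toolkit. Your identification of the range $|U|<|w|<2|u|$ as the delicate one is accurate: there $w^2$ is not contained in $U^2$, so the ambient word beyond position $2|U|$ enters, and one must use that $w^2$ has its \emph{last} occurrence at position~$2$ rather than merely an occurrence. One caution: in the small range $|w|<|u|$ you assert that $w^2$ lies inside $u^2$ and plan to run Fine--Wilf against the period $|u|$ of $u^2$; when $|w|$ is close to $|u|$ the Fine--Wilf length bound $|w|+|u|-\gcd(|w|,|u|)$ may exceed $2|w|$, so you will indeed need the fallback you describe --- reading off identities from the explicit $v_1,v_2$ factorisation, or instead exhibiting a later occurrence of $w^2$ using the second copy of $u$ inside $U^2$ --- rather than Fine--Wilf alone.
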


Ilie \cite{Ili07} also considered the case when the lengths of squares in a run of 2's are equal. His lemma was originally stated for $m\geq 2$, but it also holds for $m=1$.
	
\begin{lemma} [\cite{Ili07}]
	\label{LemIlie3}
	Let $m \geq 1$ be such that $i$ is an FS-double-square-position for all $i \in [1..m]$, and let $(u_i,U_i)$ be the FS-double-square at $i$.  If $|u_1|=\cdots=|u_m|$ and $|U_1|=\cdots=|U_m|$, then for all $i \in [1..m]$ the following hold:
	\begin{enumerate}
	\item $\length{U_i}+m \leq 2\length{u_i}$, 
	\item $\length{U_i} \geq \length{u_i}+m+1$,  
	\item $\length{U_i} \geq 3m+2$ and $\length{u_i} \geq 2m+1$.
	\end{enumerate}
\end{lemma}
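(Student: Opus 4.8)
The plan is to distill from the hypothesis two long periodic prefixes of $w$ and then play them against the last-occurrence property and the Fine--Wilf periodicity lemma. Write $p=\length{u_1}=\cdots=\length{u_m}$ and $q=\length{U_1}=\cdots=\length{U_m}$; since each $(u_i,U_i)$ is an FS-double-square we have $p<q<2p$, and because $p,q$ are common values it suffices to prove each of the three inequalities once. The first thing I would establish is that $w[1..2p+m-1]$ has period $p$ and $w[1..2q+m-1]$ has period $q$: for $i\in[1..m]$ the occurrence $u_i^2=w[i..i+2p-1]$ is $p$-periodic and $U_i^2=w[i..i+2q-1]$ is $q$-periodic, and these $m$ consecutive windows overlap enough to glue (each index $j$ in the stated ranges, together with $j+p$, resp.\ with $j+q$, lies in a common window, so $w[j]=w[j+p]$, resp.\ $w[j]=w[j+q]$). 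Since $2p+m-1<2q+m-1$, the shorter prefix $w[1..2p+m-1]$ then carries \emph{both} periods $p$ and $q$, which is the workhorse for the second inequality.

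For the first inequality $\length{U_i}+m\le 2\length{u_i}$, the case $m=1$ is exactly $q<2p$, so I would take $m\ge 2$ and argue by contradiction. If $q+m>2p$, then $q+2p\le 2q+m-1$, so $q$-periodicity of $w[1..2q+m-1]$ forces $w[q+1..q+2p]=w[1..2p]=u_1^2$. That produces an occurrence of $u_1^2$ beginning at position $q+1>1$, contradicting that the last occurrence of $u_1^2$ begins at position $1$. Hence $q+m\le 2p$.

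For the second inequality $\length{U_i}\ge\length{u_i}+m+1$ I would invoke Fine and Wilf. By Lemma~\ref{Lem Deza 6} the prefix $U_1=w[1..q]$ is primitive, so $w[1..2p+m-1]$ cannot have period $d=\gcd(p,q)$: otherwise $U_1$ would have period $d$, and since $d\mid q$ and $d\le p<q$ this would make $U_1$ a proper power. Thus $w[1..2p+m-1]$ has periods $p$ and $q$ but not $\gcd(p,q)$, and the periodicity lemma forces $2p+m-1<p+q-d\le p+q-1$, i.e., $q\ge p+m+1$. The third inequality is then immediate arithmetic: $p\ge (q+m)/2$ from the first part, substituted into $q\ge p+m+1$, gives $q\ge 3m+2$, and hence $p\ge (q+m)/2\ge 2m+1$.

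I expect the main obstacle to be the very first step: pinning down the exact prefix lengths $2p+m-1$ and $2q+m-1$ and doing the overlap bookkeeping carefully, so that the translated copy of $u_1^2$ genuinely lies inside the $q$-periodic prefix (this is where $q<2p$ enters) and so that the Fine--Wilf application is non-vacuous (one should check $w[1..2p+m-1]$ is strictly longer than $q$, which again uses $q<2p$ and $m\ge 1$). Everything after that is a one-line computation.
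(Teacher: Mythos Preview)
The paper does not prove this lemma at all; it is simply quoted from Ilie~\cite{Ili07} and used as a black box, so there is no ``paper's own proof'' to compare against. Your argument is correct and is, in fact, essentially Ilie's original proof: glue the $m$ overlapping copies of $u_i^2$ (resp.\ $U_i^2$) into a single $p$-periodic (resp.\ $q$-periodic) prefix, use the last-occurrence property of $u_1^2$ to rule out $q+m>2p$, and apply Fine--Wilf together with the primitivity of $U_1$ (which the present paper records in Lemma~\ref{Lem Deza 6}) to get $q\ge p+m+1$; the third item is then arithmetic. The bookkeeping you flag as the main obstacle---that the translated copy of $u_1^2$ sits inside the $q$-periodic prefix because $q+2p\le 2q+m-1$, and that $q<2p$ guarantees the Fine--Wilf application is non-vacuous since $q\le 2p+m-1$---is exactly right.
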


From Lemma~\ref{LemIlie3}, it follows that for a given $m$, the minimum length of a word beginning with $m$ FS-double-squares, whose lengths are preserved, is $7m+3$. The existence of such a word can be easily verified by constructing one. Theorem~\ref{Thm unique} constructs a word $w_m$ of length $7m+3$ with a prefix of $m$ FS-double-square-positions; the number $m$ of initial FS-double-square-positions is maximum for a word of that length. We do not claim $w_m$ has the largest possible number of distinct squares given its length (words with higher 
distinct-square-densities will be constructed in the next section).  

\begin{theorem}
	\label{Thm unique}
	Let $m \geq 1$. Then there exists a word $w_m$ of length $7m+3$ such that $i$ is an FS-double-square-position with double-square $(u_i, U_i)$ for all $i \in [1..m]$, where $\length{u_1} = \cdots = \length{u_m} = 2m+1$ and $\length{U_1}= \cdots = \length{U_m}=3m+2$. This word is unique up to a renaming of letters, i.e., $$w_m=(a^{m-1}baa^{m-1}ba^{m-1}ba)^2a^{m-1}$$ where $a$ and $b$ are distinct letters of the alphabet.  
\end{theorem}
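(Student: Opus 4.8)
The plan is to prove the existence and uniqueness parts separately, in both cases working over the two-letter alphabet $\{a,b\}$ and tracking the positions of the letter $b$. For existence, I would first rewrite the candidate word, merging adjacent powers of $a$, in the form
\[
 w_m \;=\; a^{m-1}\,b\,a^{m}\,b\,a^{m-1}\,b\,a^{m}\,b\,a^{m}\,b\,a^{m-1}\,b\,a^{m},
\]
so that $w_m$ has exactly six occurrences of $b$, at positions $m,\ 2m+1,\ 3m+1,\ 4m+2,\ 5m+3,\ 6m+3$, and the maximal $a$-runs around them have lengths $m-1,m,m-1,m,m,m-1,m$. For $i\in[1..m]$ I would then \emph{define} $u_i = a^{m-i}\,b\,a^{m}\,b\,a^{i-1}$ and $U_i = a^{m-i}\,b\,a^{m}\,b\,a^{m-1}\,b\,a^{i}$, so that $|u_i|=2m+1$ and $|U_i|=3m+2$, and check by comparing run-length profiles that $u_i^2 = w_m[i..i+4m+1]$ and $U_i^2 = w_m[i..i+6m+3]$; this is a direct computation, using $a^{i}a^{m-i}=a^{m}$ at the junction of the two copies and the inequality $i\le m$ to see that the outer blocks $a^{m-i}$ and $a^{i-1}$ (respectively $a^{i}$) fit at both ends. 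For $i=1$ this also follows from Lemma~\ref{Lem Deza converse} with $v_1=a^{m-1}ba$, $v_2=a^{m-1}b$, $e_1=e_2=1$, but the direct check handles all $i$ and the appended suffix $a^{m-1}$ uniformly.

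The main point of the existence part is that for each $i$, neither $u_i^2$ nor $U_i^2$ occurs in $w_m$ starting at a position larger than $i$; combined with the previous paragraph this makes $(u_i,U_i)$ the FS-double-square at $i$, with the required lengths. Here I would count occurrences of $b$. Since $U_i^2$ contains exactly six $b$'s and $w_m$ contains exactly six, any occurrence of $U_i^2$ in $w_m$ must align its six $b$'s with all six $b$'s of $w_m$, which forces it to begin at $i$. Since $u_i^2$ contains exactly four $b$'s, any occurrence in $w_m$ must align its four $b$'s with four \emph{consecutive} $b$'s of $w_m$, that is, with those at $\{m,2m+1,3m+1,4m+2\}$, $\{2m+1,3m+1,4m+2,5m+3\}$, or $\{3m+1,4m+2,5m+3,6m+3\}$, for which the sequences of $a$-run lengths strictly between consecutive $b$'s are $(m,m-1,m)$, $(m-1,m,m)$, and $(m,m,m-1)$ respectively; only the first equals the corresponding sequence $(m,m-1,m)$ of $u_i^2$, so again the occurrence must begin at $i$. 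Hence each of $u_i^2$ and $U_i^2$ has its last occurrence at $i$, both begin at $i$, and $|u_i|<|U_i|$, so $(u_i,U_i)$ is an FS-double-square.

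For uniqueness, let $w$ be any word of length $7m+3$ for which $i$ is an FS-double-square-position with double-square $(u_i,U_i)$, $|u_i|=2m+1$, $|U_i|=3m+2$, for every $i\in[1..m]$. Because $u_i^2$ occupies $w[i..i+4m+1]$, the word $w$ has period $2m+1$ on that interval, and taking the union over $i=1,\dots,m$ shows $w[1..5m+1]$ has period $2m+1$; likewise, because $U_i^2$ occupies $w[i..i+6m+3]$, the union over $i$ shows all of $w$ has period $3m+2$. So $w[1..5m+1]$ carries the two coprime periods $2m+1$ and $3m+2$, and its length $5m+1=(2m+1)+(3m+2)-2$ is exactly one below the Fine--Wilf bound. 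If $w[1..5m+1]$ were a power of a single letter, then $u_1^2$ would also occur at position $2$, contradicting that its last occurrence is at position $1$; hence $w[1..5m+1]$ is not unary, and therefore --- this being the extremal case of the Fine--Wilf theorem, in which the positions $1,\dots,5m+1$ split into exactly two classes under the two period relations --- $w[1..5m+1]$ is determined up to renaming its two letters. Finally the global period $3m+2$ gives $w[5m+2..7m+3]=w[2m..4m+1]$, which already lies inside $w[1..5m+1]$, so $w$ is determined up to renaming; since $w_m$ itself satisfies all these hypotheses by the existence part, $w=w_m$ up to renaming.

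The step I expect to be the main obstacle is the extremal Fine--Wilf argument in the uniqueness half: one has to justify carefully that a non-unary word of length $p+q-2$ with coprime periods $p=2m+1$ and $q=3m+2$ is unique up to renaming of its two letters. The rest is bookkeeping once the six $b$-positions of $w_m$ are on the table: the run-length matching for $u_i^2$ and $U_i^2$, the $b$-counting that rules out later occurrences, and the propagation of the two periods. A small thing to double-check on the side is the degenerate case $m=1$, where $a^{m-1}=\varepsilon$ and $w_1=babbababba$, to make sure the run-length description and all the alignments remain valid.
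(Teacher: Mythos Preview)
Your proof is correct and takes a genuinely different route from the paper's, particularly in the uniqueness half. For existence, the paper asserts that the last (and only) occurrences of $u_1^2$ and $U_1^2$ are at position~$1$ (``easy to see'') and then invokes cyclic right shifts in the sense of Deza et al.\ to carry the double-square to positions $2,\dots,m$; your $b$-counting and run-length matching make this step fully explicit and is arguably cleaner. For uniqueness, the paper does a bare-hands computation: it chases the equalities $a_i=a_{i+2m+1}$ and $a_i=a_{i+3m+2}$ through a specific chain to get $a_1=\cdots=a_{m-1}$, and then concludes that the six positions $m,2m+1,3m+1,4m+2,5m+3,6m+3$ carry a second letter. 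You instead recognise that $w[1..5m+1]$ carries the coprime periods $2m+1$ and $3m+2$ at exactly one below the Fine--Wilf threshold and invoke the known extremal Fine--Wilf fact that such a non-unary word is unique up to swapping its two letters, after which the global period $3m+2$ fills in the tail. What your approach buys is a conceptual explanation of \emph{why} the word is unique (the Sturmian/central-word structure behind Fine--Wilf extremality) and immediate generalisability to other coprime length pairs; what the paper's direct computation buys is being entirely self-contained. One small point to tidy: in your non-unary step (``$u_1^2$ would also occur at position $2$''), for $m=1$ the needed position $4m+3$ lies outside $[1..5m+1]$, so you must also use the global period $3m+2$ to get $w[4m+3]=w[m+1]$ before concluding; you already flagged $m=1$ for inspection, and this is the one place it bites.
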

\begin{proof}
		We construct such a word $w_m$. Note that $w_m$ must contain at least two distinct letters, otherwise the squares would appear later. Let $u_1=a^{m-1}baa^{m-1}b$ and $U_1=u_1a^{m-1}ba$, and let $$w_m=U_1^2a^{m-1}=(a^{m-1}baa^{m-1}ba^{m-1}ba)^2a^{m-1}.$$
		
		By \cite[Definition~10]{DeFrTh}, recall that a factor $u=x[{i'}..{j'}]$ of a word $x$ can be cyclically shifted right by 1 position if $x[{i'}]=x[{j'}+1]$.  The factor $u$ can be cyclically shifted right by $k$ positions if $u$ can be cyclically shifted right by 1 position and the factor $x[{i'}+1..{j'}+1]$ can be cyclically shifted right by $k-1$ positions.  This similarly holds for left cyclic shifts.  A trivial cyclic shift is a shift by 0 positions. 
		
		It is easy to see that the last and only occurrences of both $u_1^2$ and $U_1^2$ in $w_m$ are at position~1.  Furthermore, both $u_1^2$ and $U_1^2$ can be cyclically shifted right $m-1$ times, such that $u_i=a^{m-i}baa^{m-1}ba^{i-1}$ and $U_i=a^{m-i}baa^{m-1}ba^{m-1}ba^i$.  With this shift, both $u_i^2$ and $U_i^2$ have last occurrences at position~$i$ of $w_m$, for all $i \in [1..m]$.  Thus, $w_m$ begins with $m$ FS-double-square-positions. 
			
		We claim that $w_m$ is unique up to a renaming of letters. Let $w=a_1 \cdots a_{7m+3}$ be a word that satisfies the requirements.	We have $u_1=a_1 \cdots a_{2m+1}$ and $U_1=u_1a_{2m+2}\cdots a_{3m}a_{3m+1}a_{3m+2}=u_1a_1\cdots a_{m-1}a_{m}a_{m+1}$, and $w=U_1^2 a_{6m+5}\cdots a_{7m+3}$.
		
		For all $i \in [1..m]$, we have the square $u_i^2$ of length $4m+2$ and the square $U_i^2$ of length $6m+4$ both beginning at position~$i$, so $a_i \cdots a_{i+2m}= a_{i+2m+1} \cdots a_{i+4m+1}$ and $a_i \cdots a_{i+3m+1}= a_{i+3m+2} \cdots a_{i+6m+3}$. This implies that for all such $i$, we have $a_i=a_{i+2m+1}=a_{i+3m+2}=a_{i+5m+3}=a_{i+m+1}=a_{i+4m+3}$. 
		
		We next show that the first $m-1$ positions of $w$ each have the same letter, i.e., $a_1=\cdots = a_{m-1}$. To do this, we show that for all $i \in [2..m-1]$, we have $a_i=a_{i-1}$. So let $i \in [2..m-1]$. Recalling that $\length{u_i}=2m+1$ and $\length{U_i}=3m+2$, an FS-double-square $(u_i, U_i)$ beginning at position $i$ implies that $a_i=a_{i+2m+1}$ and an FS-double-square $(u_{i+1}, U_{i+1})$ at position $i+1$ implies that the following letters are equal: 
		\begin{center}
		$\begin{array}{rcl}
			a_i=a_{i+2m+1}=w[(i+1)+2m]&=&w[((i+1)+2m)+(2m+1)]\\
			&=&w[(i+1)+4m+1]\\
			&=&w[((i+1)+4m+1)-(3m+2)]\\
			&=&w[(i+1)+m-1]\\
			&=&w[((i+1)+m-1)+(2m+1)]\\
			&=&w[(i+1)+3m]\\
			&=&w[((i+1)+3m)-(3m+2)]\\
			&=&w[i-1]=a_{i-1}.
			\end{array}$
		\end{center}

	It follows that all positions in $w$ other than $m, 2m+1, 3m+1, 4m+2, 5m+3$, and $6m+3$ must have the same letter.  It can be easily verified that the remaining six positions must all have the same letter, and that they may not have the same letter as position~1. Our claim follows. 
	\end{proof}
	
	By Lemma~\ref{Lem Deza 6}, the first FS-double-square $(u_1,U_1)$ of $w_m$ of Theorem~\ref{Thm unique} satisfies $u_1=a^{m-1}baa^{m-1}b=v_1^{e_1}v_2$ and $U_1=a^{m-1}baa^{m-1}ba^{m-1}ba=v_1^{e_1}v_2v_1^{e_2}$, where $v_1=a^{m-1}ba$, $v_2=a^{m-1}b$, and $e_1=e_2=1$. As discussed in Section~2, we can write $w_m=(a^{m-1}ba, a^{m-1}b, 1, 1)a^{m-1}$.  
	
	The word in Figure~\ref{exsquares} is the $(m=2)$-case of Theorem~\ref{Thm unique} whose distinct-square-sequence can be generalized as follows.
	
	\begin{theorem}
	\label{Thm run}
	For $w_m$ as in Theorem~\ref{Thm unique}, $$s(w_m)=2^m0^{2m}1^{m+1}001^m0^{2\lfloor \frac{m+1}{2}\rfloor} (10)^{\lfloor \frac{m}{2} \rfloor}.$$
	\end{theorem}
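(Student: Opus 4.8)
\medskip
\noindent\emph{Proof plan.}
The plan is to determine, position by position, exactly which distinct squares of $w_m$ have their last occurrence beginning at that position. The first move is to rewrite $w_m=(a^{m-1}baa^{m-1}ba^{m-1}ba)^2a^{m-1}$ in the transparent form
\[
w_m=a^{m-1}\,b\,a^m\,b\,a^{m-1}\,b\,a^m\,b\,a^m\,b\,a^{m-1}\,b\,a^m ,
\]
obtained by collecting maximal blocks of $a$'s; in this form the six occurrences of $b$ lie at positions $m$, $2m+1$, $3m+1$, $4m+2$, $5m+3$, $6m+3$, and the seven maximal runs of $a$'s, read left to right, have lengths $m-1,m,m-1,m,m,m-1,m$. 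Positions $1,\dots,m$ are settled at once: by Theorem~\ref{Thm unique} each of them is an FS-double-square-position, so $s_i=2$ there, and since $s_i\le 2$ always, the squares $u_i^2$ and $U_i^2$ are exactly the ones whose last occurrence begins at $i$. It remains to analyse positions $m+1,\dots,7m+3$.

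For that I would classify every square $r^2$ occurring in $w_m$ by the number $k$ of letters $b$ in its root $r$; since $w_m$ contains only six $b$'s and $r^2$ contains $2k$ of them, $k\in\{0,1,2,3\}$. If $k=0$ then $r^2=a^{2\ell}$, which is a factor of $w_m$ if and only if $2\ell\le m$, i.e.\ $\ell\in[1..\lfloor m/2\rfloor]$, and its last occurrence lies inside the trailing run $a^m$ and begins at position $7m+4-2\ell$. If $k=2$ or $k=3$, then the four (respectively six) $b$'s of $r^2$ must be consecutive among the six $b$'s of $w_m$, and matching the distance from the first $b$ of $r$ to the first $b$ of the copy of $r$ against the distance between the corresponding later $b$'s forces $|r|=2m+1$ (respectively $|r|=3m+2$); a direct check then identifies these squares as exactly the $u_i^2$ (respectively the $U_i^2$) of Theorem~\ref{Thm unique}, each occurring only at position $i$. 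The substantive case is $k=1$: a square $(a^sba^t)^2$ has two $b$'s at distance $|r|=s+t+1$, so $|r|$ is a difference of two of the six $b$-positions, and since no maximal run of $a$'s has length exceeding $m$ and at most one of the two runs flanking a given $b$ has length $m$, one obtains $|r|\in\{m,m+1\}$. For each of these two lengths I would run through every consecutive $b$-pair realising the corresponding gap, determine which ways of splitting the enclosed run of $a$'s into the prefix $a^s$ and suffix $a^t$ of the root are compatible with the two neighbouring runs, list the (at most three) occurrences of the resulting square, and read off the rightmost one.

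Carrying this through, the root-length-$(m+1)$ squares turn out to have their last occurrences exactly at positions $3m+1,3m+2,\dots,4m+1$, one square per position, and the root-length-$m$ squares exactly at positions $4m+4,4m+5,\dots,5m+3$, again one per position. Collating the four cases: no square last-occurs at a position in $[m+1..3m]$ or at $4m+2$ or $4m+3$; exactly one last-occurs at each position of $[3m+1..4m+1]$ and of $[4m+4..5m+3]$; and throughout $[5m+4..7m+3]$ the only last occurrences are those of the $\lfloor m/2\rfloor$ squares $a^{2\ell}$, whose starting positions $7m+4-2\ell$ fall, after a short case split on the parity of $m$, exactly so as to spell $0^{2\lfloor\frac{m+1}{2}\rfloor}(10)^{\lfloor\frac{m}{2}\rfloor}$ in that block. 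Reading $s(w_m)$ off this description of all positions then gives $2^m0^{2m}1^{m+1}001^m0^{2\lfloor\frac{m+1}{2}\rfloor}(10)^{\lfloor\frac{m}{2}\rfloor}$, as claimed.

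The step I expect to be the main obstacle is the $k=1$ bookkeeping. There are several consecutive $b$-pairs to handle, a single cyclic rotation $a^sba^t$ of a root may occur next to more than one of them (so that the square $(a^sba^t)^2$ has up to three occurrences), the endpoint cases $s=0$ and $t=0$ behave a little differently from the interior ones, and one must argue cleanly that no occurrence lies to the right of the one identified as last. Keeping all of this organised, and then verifying that the resulting multiset of last-occurrence positions reassembles — with precisely the parity-dependent tail — into the stated run-length expression, is where the care is required.
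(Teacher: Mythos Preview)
Your proposal is correct and follows essentially the same approach as the paper: both arguments classify the squares of $w_m$ by the number of $b$'s they contain (the paper does this when it says ``either the squares have two $b$'s or four $b$'s or they have only $a$'s''), identify the $k=2,3$ squares with the $u_i^2,U_i^2$ at positions $1,\dots,m$, the $k=1$ squares with the rotations of $(a^{m-1}b)^2$ and $(a^{m-1}ba)^2$ that fill the two blocks of $1$'s, and the $k=0$ squares with the $a^{2\ell}$'s that produce the parity-dependent tail. The only organisational difference is that you work forward from an explicit list of all squares to their last-occurrence positions, whereas the paper works backward through successive segments of $w_m$; your version is more explicit about the $k=1$ bookkeeping that the paper handles with ``it is easy to see''.
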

	\begin{proof}
	As noticed earlier, $w_m$ begins with $m$ FS-double-square-positions, so $s(w_m)[1..m]=2^m$. 

Let us look at the $aa^{m-1}$-suffix of $w_m$. We have one position where the last occurrence of the square $(a^i)^2$, where $1 \leq i \leq \lfloor \frac{m}{2} \rfloor$, begins, so the corresponding position in the suffix of $s(w_m)$ is a 1. Each of the other $\lfloor \frac{m}{2} \rfloor$ positions in the $aa^{m-1}$-suffix of $w_m$ corresponds to a 0, yielding a suffix of $(10)^{\lfloor \frac{m}{2} \rfloor}$ for $s(w_m)$. Looking at the $a^{m-1}baa^{m-1}$-suffix of $w_m$, the remaining positions must correspond to 0 since any square cannot contain only one $b$, and the other squares only contain $a$'s but appear later. So $s(w_m)[5m+4..7m+3]=0^{2\lfloor \frac{m+1}{2}\rfloor} (10)^{\lfloor \frac{m}{2} \rfloor}$.

Next, let us look at $w_m[4m+4..7m+3]=a^{m-1}ba^{m-1}baa^{m-1}$. If a square contains only $a$'s, its last occurrence has already been discussed. Otherwise, it contains $b$'s. Each of the positions $4m+4, \ldots, 5m+3$ give the last occurrence of the square $(a^{m-1}b)^2$ or one of its rotations. So $s(w_m)[4m+4..5m+3]=1^m$. It is easy to see that $s(w_m)[4m+2..4m+3]=00$. Also, $s(w_m)[3m+1..4m+1]=1^{m+1}$ due to the last occurrence of the square $(baa^{m-1})^2$ and its rotations.

Finally, let us look at $w_m[m+1..7m+3]$. Either the squares have two $b$'s or four $b$'s or they have only $a$'s.
The case of four $b$'s is impossible and the case of only $a$'s have their last occurrence later. In the case of two $b$'s, the square $(a^{m-1}b)^2$ or its rotations at positions $m+2, \ldots, 2m+1$ appear later. Similarly,  the square $(a^{m-1}ba)^2$ and its rotations at positions $2m+2, \ldots, 3m$ appear later. It is easy to check that no square has its last occurrence beginning at position~$m+1$. So $s(w_m)[m+1..3m]=0^{2m}$. 
\end{proof}			

For $w_m$ as in Theorem~\ref{Thm unique}, $d(w_m)=\frac{4.5m+1}{7m+3}$ for even values of $m$ and $d(w_m)=\frac{4.5m+.5}{7m+3}$ for odd values of $m$.  In both cases, for large values of $m$ the distinct-square-density of $w_m$ is approximately .643.  
				
From Theorem~\ref{Thm unique}, it follows that the probability of a word of length $7m+3$ beginning with a run of $m$ FS-double-square-positions, where the lengths of the squares are preserved, is equal to $\frac{|A|^2-|A|}{|A|^{7m+3}}$.

	While we prove the converse of Lemma~\ref{LemIlie3} in a specific case, the converse is not true in general. That is, in the proof of Theorem~\ref{Thm unique} we construct, for every $m\geq 1$, a word starting with $m$ FS-double-squares $(u_i,U_i)$, whose lengths are preserved, such that $1'.$ $\length{U_i}+m = 2\length{u_i}$, $2'.$ $\length{U_i} = \length{u_i}+m+1$,  and $3'.$ $\length{U_i} = 3m+2$ and $\length{u_i} = 2m+1$, but the construction may be impossible if we replace the equalities $1'-2'-3'$ with Lemma~\ref{LemIlie3}'s inequalities $1-2-3$.  For example, given $m=1$, $\length{u_1}=6$, and $\length{U_1}=8$, all three conditions of Lemma~\ref{LemIlie3} are satisfied, but it is impossible to construct a word starting with one FS-double-square fulfilling those criteria.  The same is true for $m=1$, $\length{u_1}=6$, and $\length{U_1}=9$ or $m=2$, $\length{u_1}=\length{u_2}=6$, and $\length{U_1}=\length{U_2}=9$. However, the following theorem holds.

\begin{theorem}
	\label{Thm larger alphabet}
	Let $m \geq 1$ and $\ell \geq m$. Then there exist at least $\length{A}(\length{A}-1)^{\ell-m+1}$ and fewer than $\length{A}^{\ell-m+2}$ words of length $6\ell+m+3$ over an alphabet $A$, with $|A| \geq 2$, such that $i$ is an FS-double-square-position with double-square $(u_i, U_i)$ for all $i \in [1..m]$, where $\length{u_1} = \cdots = \length{u_m} = 2\ell+1$ and $\length{U_1}= \cdots = \length{U_m}=3\ell+2$.  
	\end{theorem}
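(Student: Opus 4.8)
The plan is to view Theorem~\ref{Thm larger alphabet} as the natural generalisation of Theorem~\ref{Thm unique}, which is exactly the case $\ell=m$ (there $6\ell+m+3=7m+3$, the two bounds meet at $|A|(|A|-1)$, and ``unique up to renaming'' says the exact count equals that common value). I would prove the lower bound by writing down an explicit $\ell$-parameter family of words, and the upper bound by showing that every admissible word is recovered from a short prefix of restricted shape.

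For the \emph{lower bound}, fix a letter $a\in A$ and a word $\gamma$ of length $\ell-m+1$ over $A\setminus\{a\}$ --- giving $|A|(|A|-1)^{\ell-m+1}$ choices --- and set $v_1=a^{m-1}\gamma a$ (length $\ell+1$), $v_2=a^{m-1}\gamma$ (a nonempty proper prefix of $v_1$), $u_1=v_1v_2$, $U_1=v_1v_2v_1$, and $w=U_1^2a^{m-1}$ (length $6\ell+m+3$). First I would check that $v_1$ is primitive: its only occurrences of $a$ are the initial block $a^{m-1}$ and the final letter, so its maximal $a$-runs have the lopsided length pattern $(m-1,1)$ around the $a$-free nonempty block $\gamma$, which rules out $v_1$ being a nontrivial power. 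Lemma~\ref{Lem Deza converse} then yields that $(u_1,U_1)$ is an FS-double-square at position~$1$ of $U_1^2$ with $|u_1|<|U_1|<2|u_1|$. To push this into $w$ the key remark is that $a^{m-1}$ is a prefix of $U_1$ but $a^m$ is not (the $m$-th letter of $U_1$ is the first letter of $\gamma$, which is $\ne a$), so $w$ is a prefix of $U_1^\infty$. Since $U_1$ is primitive, Fine--Wilf forces every occurrence of $U_1^2$ or of $u_1^2$ in $U_1^\infty$ to begin at a position $\equiv1\pmod{|U_1|}$; inside $w$, using $m\le\ell$ and the lengths $|u_1^2|=4\ell+2$, $|U_1^2|=6\ell+4$, the next such position $|U_1|+1=3\ell+3$ already lies too far right, so both $u_1^2$ and $U_1^2$ have their last occurrence at position~$1$ of $w$. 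For $i\in[2..m]$ (note $i-1\le m-1<|U_1|$), the factor $w[i..i+6\ell+3]$ is the cyclic rotation $U_i^2$ of $U_1^2$ by $i-1$ positions, $U_i$ is primitive, and its length-$(4\ell+2)$ prefix is the corresponding rotation $u_i^2$ of $u_1^2$; the same Fine--Wilf/length estimate shows both are last occurrences, so $i$ is an FS-double-square position with double-square $(u_i,U_i)$, $|u_i|=2\ell+1$, $|U_i|=3\ell+2$. These words are pairwise distinct because $w[1..\ell+1]=v_1$ recovers $a$ and $\gamma$. This gives at least $|A|(|A|-1)^{\ell-m+1}$ admissible words.

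For the \emph{upper bound}, let $w$ be any admissible word. Applying Lemma~\ref{Lem Deza 6} to the FS-double-square $(u_1,U_1)$ (its third hypothesis holds automatically) and comparing lengths --- $e_2|v_1|=|U_1|-|u_1|=\ell+1$, $e_1|v_1|+|v_2|=2\ell+1$, $1\le|v_2|\le|v_1|-1$ --- we get $w=(v_1^{e_1}v_2v_1^{e_2})^2\cdot s$ with $|s|=m-1$, $v_1$ primitive, $v_2$ a prefix of $v_1$, and $|v_1|$ a divisor of $\ell+1$; moreover $e_2=(\ell+1)/|v_1|$ and, because a window of $|v_1|-1$ consecutive integers meets at most one multiple of $|v_1|$, the exponent $e_1$ and hence $|v_2|$ are forced by $|v_1|$. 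I would then show $w$ is determined by $w[1..\ell+1]$: since $e_1\ge e_2$, this prefix is a prefix of $v_1^\infty$ of length $\ell+1\ge|v_1|$, so by Fine--Wilf and primitivity of $v_1$ its minimal period is exactly $|v_1|$, which recovers $|v_1|$, then $v_1$, then $e_1,e_2,v_2$, $U_1$, and finally $s$ --- indeed the square conditions at positions $2,\dots,m$ together with the periodicity of $U_1^2$ force $s=w[1..m-1]$. Thus $w\mapsto w[1..\ell+1]$ is injective on admissible words. Next, exactly as in the proof of Theorem~\ref{Thm unique}, chasing equalities through the squares at positions $i-1,i,i+1$ gives $w[1]=\cdots=w[m-1]$, and (for $m\ge2$) equating the two halves of $u_2^2$ gives $w[\ell+1]=w[1]$, i.e.\ $v_1$ ends in its first letter. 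Hence $w[1..\ell+1]$ lies in the set of length-$(\ell+1)$ words whose first $m-1$ letters and last letter all coincide with the first letter; for $m\ge2$ this set has size $|A|\cdot|A|^{\ell-m+1}=|A|^{\ell-m+2}$, and for $m=1$ it is all of $A^{\ell+1}$, of the same size $|A|^{\ell-m+2}$. This containment is proper: the constant word $c^{\ell+1}$ belongs to the set but is not $w[1..\ell+1]$ for any admissible $w$, since that would force $v_1=c^{|v_1|}$ with $|v_1|\ge2$, contradicting primitivity of $v_1$. Therefore the number of admissible words is at most $|A|^{\ell-m+2}-1<|A|^{\ell-m+2}$.

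The main difficulty is the occurrence bookkeeping on both sides: in the construction, confirming that \emph{every} $u_i^2$ and $U_i^2$ is genuinely a \emph{last} occurrence (and that no other square interferes), and in the upper bound, confirming that $|v_1|$ --- which need not be $\ell+1$ --- together with all of $w$ is recoverable from the prefix $w[1..\ell+1]$. Both are resolved by the same device: realising the relevant words as prefixes of $U_1^\infty$ (resp.\ $v_1^\infty$) and invoking the primitivity of $U_1$ (resp.\ $v_1$) through Fine--Wilf.
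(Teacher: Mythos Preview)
Your lower-bound construction is exactly the paper's: it too takes $v_1=a^{m-1}\gamma a$, $v_2=a^{m-1}\gamma$ with $\gamma\in(A\setminus\{a\})^{\ell-m+1}$, sets $w=U_1^2a^{m-1}$, and cyclically shifts to obtain the $m$ FS-double-squares. Your verification is considerably more explicit than the paper's, which simply asserts that ``it is easy to see'' the squares are last occurrences. Your Fine--Wilf claim that every occurrence of $u_1^2$ in $U_1^\infty$ begins at a position $\equiv1\pmod{|U_1|}$ is correct but not for the one-line reason you give: the interval $|u_1^2|=4\ell+2$ is shorter than $|u_1|+|U_1|-1=5\ell+2$, so Fine--Wilf on a single occurrence fails. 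What works is comparing \emph{two} occurrences at positions $p<q$: the interval $[p..q+4\ell+1]$ acquires periods $q-p$ and $|U_1|$, Fine--Wilf applies (the length exceeds $|U_1|$), and any length-$|U_1|$ window is a primitive rotation of $U_1$, forcing $\gcd(q-p,|U_1|)=|U_1|$.

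For the upper bound the paper takes a shorter route than you: it derives the period constraints $a_i=a_{i+\ell+1}$, $a_i=a_{i+2\ell+1}$, $a_i=a_{i+\ell}$ directly from the squares at positions $1,\dots,m$, reads off $a_1=\cdots=a_{m-1}$, and declares $a_1,a_m,\dots,a_\ell$ as the only free letters (so at most $|A|^{\ell-m+2}$ words). Your approach via Lemma~\ref{Lem Deza 6} is more structural and in principle handles the possibility $e_1,e_2>1$ that the paper silently ignores. However, your key recovery step has a gap: you claim the minimal period of $w[1..\ell+1]$ equals $|v_1|$ ``by Fine--Wilf and primitivity of $v_1$'', but when $e_2=1$ this prefix \emph{is} $v_1$, and a primitive word may well have a period strictly smaller than its length (e.g.\ $abaab$ has period~$3$). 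The fix is easy --- either recover $|v_1|$ as the smallest \emph{divisor of $\ell+1$} that is a period of $w[1..\ell+1]$, or bypass $v_1$ entirely and argue (as the paper implicitly does) that $u_1$ has period $\ell+1$, so $u_1$, $U_1=u_1\cdot w[1..\ell+1]$, and the tail $s=w[1..m-1]$ are all determined by $w[1..\ell+1]$ directly.
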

	\begin{proof} Such a word $w$ can be constructed as follows, where ${\ell'}=\ell-m+2$ and $\{\alpha_2, \ldots, \alpha_{\ell'}\}$ denotes a set of ${\ell'}-1$ letters distinct from $\alpha_1=a$.  Set $w=a_1 \cdots a_{6\ell+m+3}$ where each $a_i$ is a letter of the alphabet. Since $u_1^2=(a_1 \cdots a_{2\ell+1})^2$ and $U_1^2=(a_1 \cdots a_{3\ell+2})^2$ are squares starting at position~1, we deduce that for all $i \in [1..\ell+1]$, we have $a_i= a_{i+2\ell+1}$ and for all $i \in [1..\ell]$, we have $a_i=a_{i+\ell+1}$. Since all the $m$ first positions of $w$ are FS-double-square-positions, we also have $a_i = a_{i+\ell}$  for all $i \in [1..m-1]$. 
	
Thus for all $i \in [1..m-2]$, $a_i=a_{i+\ell+1}=a_{i+1}$, so set $a_1= \cdots=a_{m-1}=a$. Also set $a_m \cdots a_{\ell}=\alpha_2 \cdots \alpha_{\ell'}$. We obtain $w=U_1^2a^{m-1}$ where
	\begin{center}
			$\begin{array}{rcl}
			u_1&=&a^{m-1}\alpha_2\cdots\alpha_{\ell'}aa^{m-1}\alpha_2\cdots\alpha_{\ell'},\\
			U_1&=&a^{m-1}\alpha_2\cdots\alpha_{\ell'}aa^{m-1}\alpha_2\cdots\alpha_{\ell'}a^{m-1}\alpha_2\cdots\alpha_{\ell'}a.	
		\end{array}$
		\end{center}
		
		As with Theorem~\ref{Thm unique}, it is easy to see that the last and only occurrences of both $u_1^2$ and $U_1^2$ in $w$ are at position~1 of $w$.  Furthermore, both $u_1^2$ and $U_1^2$ can be cyclically shifted right $m-1$ times, such that 
		\begin{center}
		$\begin{array}{rcl}
		u_i&=&a^{m-i}\alpha_2\cdots\alpha_{\ell'}aa^{m-1}\alpha_2\cdots\alpha_{\ell'}a^{i-1},\\
		U_i&=&a^{m-i}\alpha_2\cdots\alpha_{\ell'}aa^{m-1}\alpha_2\cdots\alpha_{\ell'}a^{m-1}\alpha_2\cdots\alpha_{\ell'}a^i.
			\end{array}$
			\end{center}
			With this shift, both $u_i^2$ and $U_i^2$ have last occurrences at position~$i$ of $w$, for all $i \in [1..m]$.  Thus, $w$ begins with $m$ FS-double-square-positions. 
			
			The minimum number of possible words $\length{A}(\length{A}-1)^{\ell-m+1}$ is calculated by allowing $\alpha_1$ to be any letter of alphabet $A$ and allowing each letter $\alpha_i$, with $i \in [2..\ell-m+2]$ to be any letter of $A$ distinct from $\alpha_1$. The exclusive maximum number of words is calculated by allowing each letter $\alpha_i$ to be any letter of $A$ including $\alpha_1$.
				
				Note that in the case of $\ell=m$, this word $w$ is identical to $w_m$ given in Theorem~\ref{Thm unique}. Note also that the proof holds even if the letters $\alpha_2, \ldots, \alpha_{\ell'}$ are not distinct. To see this, letting $\alpha_2, \ldots, \alpha_{\ell'}$ be some letters distinct from $a=\alpha_1$, we have $w=U_1^2a^{m-1}$ where
		\begin{center}
		$\begin{array}{rcl}
		u_1&=&a^{m-1}b^{\ell-m+1}aa^{m-1}b^{\ell-m+1},\\
		U_1&=&a^{m-1}b^{\ell-m+1}aa^{m-1}b^{\ell-m+1}a^{m-1}b^{\ell-m+1}a.		
		\end{array}$
		\end{center}
	\end{proof}

The next theorem, which constructs FS-double-squares, extends Theorem~\ref{Thm larger alphabet}.

\begin{theorem}
	\label{Thm Z words}
	Let $m \geq 1$, $Z$ be a non-empty word such that $a^{m-1}Za$ is primitive, $w=(v_1^{e_1}v_2v_1^{e_2})^2a^{m-1}$ where $v_1=a^{m-1}Za$, $v_2=a^{m-1}Z$, $e_1$ and $e_2$ are integers such that $1 \leq e_2 \leq e_1$.  Then $w$ begins with $m$ FS-double-squares $(u_i, U_i)$, $i \in [1..m]$, where 
	\begin{center}
	$u_i=(a^{m-i}Za^i)^{e_1}a^{m-i}Za^{i-1}$ and $U_i=(a^{m-i}Za^i)^{e_1}a^{m-i}Za^{i-1}(a^{m-i}Za^i)^{e_2}$.
	\end{center}
	\end{theorem}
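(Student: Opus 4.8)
The plan is to follow the template of the proofs of Theorems~\ref{Thm unique} and~\ref{Thm larger alphabet}: I would first pin down the FS-double-square at position~$1$ using Lemma~\ref{Lem Deza converse}, then obtain the FS-double-squares at positions $2,\dots,m$ by cyclically shifting the squares $u_1^2$ and $U_1^2$ to the right, and finally verify that no extraneous occurrences arise so that these really are last occurrences. For position~$1$: with $v_1=a^{m-1}Za$ and $v_2=a^{m-1}Z$ the hypotheses of Lemma~\ref{Lem Deza converse} are met for $U_1^2=(v_1^{e_1}v_2v_1^{e_2})^2$, namely $v_1$ is primitive by assumption, $v_2$ is a non-empty proper prefix of $v_1$ because $Z\neq\varepsilon$ and $v_1=v_2a$, and $1\le e_2\le e_1$; hence $U_1=v_1^{e_1}v_2v_1^{e_2}$ is primitive, $u_1^2=(v_1^{e_1}v_2)^2$ has its only occurrence in $U_1^2$ at position~$1$, and $(u_1,U_1)$ is an FS-double-square with $|u_1|<|U_1|<2|u_1|$.

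Next I would record the elementary identity $(PQ)^kP=P(QP)^k$, which is immediate by induction on $k$. Taking $P=a^{m-1}Z$ and $Q=a$ it rewrites $u_1=v_1^{e_1}v_2$ as $a^{m-1}Z(a^mZ)^{e_1}$ and $U_1$ as $a^{m-1}Z(a^mZ)^{e_1}a^{m-1}Z(a^mZ)^{e_2-1}a$; applied once more it identifies the words $u_i,U_i$ appearing in the statement with $a^{m-i}Z(a^mZ)^{e_1}a^{i-1}$ and the analogous expression, that is, with the cyclic rotations of $u_1,U_1$ obtained by moving the leading block $a^{i-1}$ to the rear. This rotation is legitimate since $i-1\le m-1$ and both $u_1$ and $U_1$ begin with $a^{m-1}$, which is precisely what will make the cyclic-shift step go through.

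The step that needs care is passing from $U_1^2$ to $w=U_1^2a^{m-1}$ (nothing to do if $m=1$): I must check that appending $a^{m-1}$ creates no new occurrence of $U_1^2$ or of $u_1^2$. A further occurrence of $U_1^2$ would have to begin at some position $j\in[2..m]$, hence place a copy of $U_1$ at position $j\le m\le|U_1|$ inside $U_1U_1$, contradicting the primitivity of $U_1$. A new occurrence of $u_1^2$ cannot lie inside $U_1^2$ by Lemma~\ref{Lem Deza converse}, so it would have to spill into the appended $a^{m-1}$; but $Z$ is not a power of $a$, for otherwise $v_1=a^{m-1}Za$ would be a power of $a$ and hence not primitive, so writing $Z=Ya^t$ with $Y$ ending in a letter different from $a$, the word $u_1^2$ ends in exactly $a^t$ whereas $w$ ends in a run of $t+m>t$ copies of $a$; a spilling occurrence would then force some position of that final $a$-run of $w$ to carry a non-$a$ letter, a contradiction. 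Hence $u_1^2$ and $U_1^2$ each occur in $w$ only at position~$1$.

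Finally, since $u_1$ and $U_1$ begin with $a^{m-1}$ and $w=U_1^2a^{m-1}$, a routine check that every letter involved equals $a$ shows that $u_1^2$ and $U_1^2$ can each be cyclically shifted right $m-1$ times within $w$, and by the identity above the square obtained after $i-1$ shifts is exactly $u_i^2$, respectively $U_i^2$, now beginning at position~$i$, for all $i\in[1..m]$. That these are last occurrences follows by re-using the earlier arguments: $U_i$ is a conjugate of the primitive word $U_1$, hence primitive, so the primitivity argument above applies with $(U_1,1)$ replaced by $(U_i,i)$; and a second occurrence of $u_i^2$ at some $j\neq i$ could, after shifting it left $|i-j|$ times (again every letter involved is $a$, because $u_i$ ends in $a^{i-1}$ and $u_j$ in $a^{j-1}$), be pushed to an occurrence of $u_1^2$ at a position $\ge2$, contradicting the previous paragraph; the same left-shift argument gives $u_i^2\neq u_j^2$ for $i\neq j$, and $U_i^2\neq U_j^2$ because distinct rotations of a primitive word are distinct, so positions $1,\dots,m$ carry $m$ genuinely distinct FS-double-squares $(u_i,U_i)$. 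I expect the main obstacle to be precisely the $a$-run bookkeeping of the third paragraph (and its reuse here): ensuring that the appended suffix $a^{m-1}$, together with the $a$-runs already present at the end of $U_1^2$, cannot host a fresh occurrence of $u_1^2$; the remainder is the identity-and-rotation computation already underlying Theorems~\ref{Thm unique} and~\ref{Thm larger alphabet}.
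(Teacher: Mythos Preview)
Your overall strategy matches the paper's: apply Lemma~\ref{Lem Deza converse} to get the FS-double-square at position~$1$, argue that the appended $a^{m-1}$ creates no new occurrence, and then propagate to positions $2,\dots,m$ by cyclic shifts. Your $a$-run bookkeeping in the third paragraph is in fact more careful than the paper's one-line ``a $Z$ would be required'' justification.

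The genuine gap is in your final paragraph, in the last-occurrence argument for $u_i^2$ when $i\ge 2$. You write that a second occurrence of $u_i^2$ at some $j>i$ can be ``shifted left $|i-j|$ times'' to produce $u_1^2$ at a position $\ge 2$. First, the shift count should be $i-1$, not $|i-j|$: shifting left $i-1$ times rotates the root from $u_i$ back to $u_1$ and moves the occurrence to position $j-i+1\ge 2$. Second, and more importantly, those $i-1$ left shifts require $w[j-i+1..j-1]=a^{i-1}$, and your justification (``$u_i$ ends in $a^{i-1}$ and $u_j$ in $a^{j-1}$'') does not establish this. It works when $j\le m$, since then $w[j-i+1..j-1]$ lies inside the prefix $w[1..m-1]=a^{m-1}$; but a hypothetical second occurrence could sit at some $j>m$ (indeed $j$ may range up to $2e_2|v_1|+m$), where nothing forces those positions to be $a$.

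The clean fix is exactly what the paper does: for each $i\in[1..m]$ set $v_{1,i}=a^{m-i}Za^i$ and $v_{2,i}=a^{m-i}Za^{i-1}$, observe that $v_{1,i}$ is primitive (being a conjugate of $v_1$) and that $v_{2,i}$ is a non-empty proper prefix of $v_{1,i}$, and apply Lemma~\ref{Lem Deza converse} directly at position~$i$. Since $w[i..|w|]=U_i^2a^{m-i}$, your own paragraph-3 $a$-run argument (with $u_i$, $U_i$, and suffix $a^{m-i}$ in place of $u_1$, $U_1$, $a^{m-1}$; note $u_i$ ends in a maximal $a$-run of length $t+i-1$ while $w[i..|w|]$ ends in $a^{t+m}$) then rules out any later occurrence of $u_i^2$ or $U_i^2$ in $w$. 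This replaces the left-shift reduction entirely and closes the gap.
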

	\begin{proof}
	Let $i \in [1..m]$.  Set $v_{1,i}=a^{m-i}Za^i$ and $v_{2,i}=a^{m-i}Za^{i-1}$. Then $v_{1,i}$ is primitive being a cyclic shift of $v_1$, and  $v_{2,i}$ is a proper non-empty prefix of $v_{1,i}$. By Lemma~\ref{Lem Deza converse}, the word $(v_{1,i}^{e_1}v_{2,i}v_{1,i}^{e_2})^2$, where $e_1$ and $e_2$ are integers such that $1 \leq e_2 \leq e_1$, is an FS-double-square.  
		
		We claim that adding any number of $a$'s to the end of $(v_1^{e_1}v_2v_1^{e_2})^2$ will not destroy the initial FS-double-square, since a $Z$ would be required to create an additional $(v_1^{e_1}v_2)^2$ or $(v_1^{e_1}v_2v_1^{e_2})^2$ to the right of the initial FS-double-square. It follows that $w$ begins with an FS-double-square $(u_1,U_1)$.  Furthermore, both $u_1^2$ and $U_1^2$ can be cyclically shifted right $m$ times, to create $m$ double-squares $(u_i,U_i)$ such that $u_i=v_{1, i}^{e_1}v_{2, i}$ and $U_i=v_{1, i}^{e_1}v_{2, i}v_{1, i}^{e_2}$.
		
		Thus, FS-double-squares are found at each of the first $m$ positions of $w$. 
	\end{proof}
	
	In Theorem~\ref{Thm Z words}, note that in the case where $Z$ begins with $a$, adding an additional $a$ to the end of $w$ will produce a word that begins with an additional FS-double-square.
	
Note that our constructions have focused on a run of 2's in the prefix in which the lengths of the double-squares remain the same. The following word 
\begin{center}
$\begin{array}{cccccccccccccccccccc}
b&a&b&b&a&b&a&b&b&a&a&a&b&b&a&b&a&b&b&a\\
2&2&0&0&0&0&0&0&0&0&1&1&1&0&0&0&1&0&0&1\\
&&&&&&&&&&&&&&&&&&&\\
a&b&b&a&b&a&b&b&a&a&a&b&b&a&b&a&b&b&a&\\
0&0&0&0&0&0&0&0&0&1&0&0&1&1&0&0&1&0&0&
\end{array}$
\end{center}
has a distinct-square-sequence with two consecutive 2's in the prefix that refer to double-squares of different lengths, $(bab,babba)$ and $(abbababbaa,abbababbaaabbababba)$.

\section{Constructions with higher distinct-square-densities}

	For any pair of integers $(i, j)$ with $i<j$, let $Y_{i,j}=X_i X_{i+1} \cdots X_{j-1}X_jaa^{j-1}$, where $$X_k=a^{k-1}baa^{k-1}ba^{k-1}baa^{k-1}baa^{k-1}ba^{k-1}b$$ for each $k \in [i..j]$. We will show that the distinct-square-density of the word $Y_{i,j}$ approaches $5/6$ as $j$ approaches infinity.

	\begin{lemma}
	\label{Lem Xk}
		Each factor $X_k$ with $k<j$ has, in $Y_{i,j}$, the distinct-square-sequence $1^{2k+1}0^{k+1}1^k01^{2k}$, giving $5k+1$ distinct squares per $X_k$. No $X_k$, where $k<j$, has more distinct squares than those listed in Table~\ref{Tab Xk}.
	\end{lemma}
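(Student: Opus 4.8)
The plan is to compute, position by position, the restriction to $X_k$ of the distinct-square-sequence of $Y_{i,j}$ --- that is, for each $p\in[1..6k+3]$, the number of distinct squares of $Y_{i,j}$ whose last occurrence begins at the $p$-th letter of this occurrence of $X_k$ --- and then to read off $1^{2k+1}0^{k+1}1^k01^{2k}$ and sum to $5k+1$. The computation runs parallel to the proof of Theorem~\ref{Thm run} for the standalone word $w_m$, but with one genuinely new ingredient: a square occurring inside $X_k$ need not have its last occurrence there, because it may reoccur inside one of the later blocks $X_{k+1},\dots,X_j$ or inside the trailing run $aa^{j-1}$. Pinning down which squares get killed to the right is the heart of the argument.

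First I would record the skeleton of $X_k$ inside $Y_{i,j}$. Because $k<j$, this occurrence of $X_k$ is immediately followed by $X_{k+1}$, whose first letter is $a$; hence $X_k$ is the length-$(6k+3)$ prefix of $U^2$, where by Lemma~\ref{Lem Deza 6} $U=v_1^{e_1}v_2v_1^{e_2}$ with $v_1=a^{k-1}ba$, $v_2=a^{k-1}b$, $e_1=e_2=1$, and the six letters $b$ of $X_k$ sit at positions $k$, $2k+1$, $3k+1$, $4k+2$, $5k+3$, $6k+3$. The engine of the proof is a single observation about the maximal runs of $a$'s in $Y_{i,j}$: inside $X_m$ the numbers of $a$'s between consecutive letters $b$, read from left to right, are $m,m-1,m,m,m-1$; the run joining the last $b$ of $X_m$ to the first $b$ of $X_{m+1}$ has length $m$; the run before the very first $b$ of $Y_{i,j}$ has length $i-1$, and the run after the very last $b$ has length $j$. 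Consequently a run of exactly $\ell$ letters $a$ bounded on both sides by a letter $b$ occurs only inside $X_\ell$ and $X_{\ell+1}$, while a run of $2\ell$ or more letters $a$ occurs only inside some $X_m$ with $m\ge 2\ell$ or inside $aa^{j-1}$ when $2\ell\le j$. This determines every occurrence in $Y_{i,j}$ of a factor of the shape $a^{p_0}ba^{p_1}b\cdots ba^{p_r}$ once its gap data $p_1,\dots,p_{r-1}$ and its flanking amounts $p_0,p_r$ are fixed.

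With that tool I would classify the squares that begin inside $X_k$ into three families. Family (i): pure powers $a^{2\ell}$; any such square fitting inside $X_k$ has $2\ell\le k<j$, so it also occurs inside the trailing $aa^{j-1}$ and contributes nothing to $s(X_k)$ --- this is precisely why $s(X_k)$ lacks the $(10)^{\lfloor m/2\rfloor}$ tail carried by $s(w_m)$. Family (ii): squares $z^2$ whose root $z$ contains exactly one letter $b$. Family (iii): squares whose root contains at least two letters $b$, among them the conjugates of $u=v_1v_2$ and of $U=v_1v_2v_1$ obtained from the cyclic shifts of Theorem~\ref{Thm unique}. For families (ii) and (iii) the root length is forced to be $O(k)$ and every gap between consecutive letters $b$ occurring in such a root is $k-1$ or $k$, which by the run observation never reappear anywhere to the right of $X_{k+1}$; so no square in these families can be killed far to the right, and it remains only to enumerate the finitely many occurrences of each such square inside $X_kX_{k+1}$, take the rightmost, and check that it begins in $X_k$. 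Here the embedding is decisive in a sharp way: the prefix $a^{k}$ of $X_{k+1}$ supplies exactly enough extra letters $a$ to carry each short conjugate $u_i^2$ one block to the right, so that its last occurrence lands at position $3k+2+i$ (this is the run $1^k$), whereas the long conjugates $U_i^2$ with $i\in[1..k]$ cannot be shifted and keep their last occurrence at position $i$; together with $k+1$ further conjugates of $U$, whose occurrences in $X_k$ straddle into $X_{k+1}$, these account for the leading run $1^{2k+1}$, while the trailing run $1^{2k}$ comes from short squares such as the conjugates of $(a^{k-1}b)^2$, whose internal gap $k-1$ already forbids any reoccurrence to the right of $X_k$. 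Assembling the positions of all surviving last occurrences yields $1^{2k+1}0^{k+1}1^k01^{2k}$, with $(2k+1)+k+2k=5k+1$ ones.

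The step I expect to be hardest is family (iii): deciding exactly which conjugates of $u$ and of $U$, and which short two-$b$ roots, have their last occurrence inside $X_k$, and locating those occurrences precisely enough that the three runs of ones come out with lengths exactly $2k+1$, $k$ and $2k$ rather than off by a boundary term --- all while never counting a square against a conjugate of itself, and while tracking the small-$k$ degeneracies (for instance the empty blocks $a^{k-1}$ when $k=1$). Once families (i)--(iii) are seen to exhaust the squares beginning in $X_k$ and every occurrence arising in (ii)--(iii) has been listed, the final clause --- that no $X_k$ with $k<j$ contains more distinct squares than those in Table~\ref{Tab Xk} --- is immediate, because Table~\ref{Tab Xk} is exactly this enumeration.
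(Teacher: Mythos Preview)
Your plan is a genuinely different route from the paper's. The paper argues in two stages: it first tabulates the distinct squares for the base case $Y_{i,i+1}=X_iX_{i+1}aa^i$ (Table~\ref{Tab Xk}), then shows that adding further blocks $X_{k+2},X_{k+3},\ldots$ neither kills any tabulated square (because each contains the marker $ba^{k-1}b$ or $ba^kb$, absent from $X_{k+2}$ onward) nor creates new ones (the latter via a case split that, for the last $k$ positions of $X_k$, invokes Ilie's inequality $\length{U}+m\le 2\length{u}$ from Lemma~\ref{LemIlie3} to cap the length of any hypothetical second square). Your approach instead works directly in the full $Y_{i,j}$: you read off the complete gap profile of $a$-runs between $b$'s, observe that a gap of size $\ell$ lives only in $X_\ell\cup X_{\ell+1}$, and use this to localise every square with a $b$ in its root to $X_kX_{k+1}$, after which a finite enumeration finishes. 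This is cleaner combinatorics and avoids the appeal to Lemma~\ref{LemIlie3}; indeed your attribution of the middle $1^k$ run (positions $3k+3$ through $4k+2$) to the shifted conjugates of $u=v_1v_2$ is \emph{correct}, whereas the paper's Table~\ref{Tab Xk} lists the shorter root $a^{k-1}ba$ at positions $3k+3$ through $4k+1$ --- but $(a^{k-1}ba)^2$ in fact reoccurs inside $X_{k+1}$ (its two $b$'s are separated by a gap of $k$, and $X_{k+1}$ has two such gaps), so its last occurrence is not in $X_k$.

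There is, however, a real gap in your sketch at the tail. You say the trailing $1^{2k}$ comes from ``short squares such as the conjugates of $(a^{k-1}b)^2$, whose internal gap $k-1$ already forbids any reoccurrence to the right of $X_k$.'' That reasoning covers only the first half, positions $4k+4$ through $5k+3$. The last $k$ positions, $5k+4$ through $6k+3$, carry squares whose root is (a conjugate of) $a^{k-1}ba^kbaa$, of length $2k+3$; the square itself straddles into $X_{k+1}$ and has internal gap sequence $k,k+1,k$. These are not one-$b$ roots, their relevant gap is not $k-1$, and showing that the triple $(k,k+1,k)$ of consecutive gaps never recurs beyond this spot is a separate check your outline does not mention. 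You will need to identify these roots explicitly and handle them in your family~(iii) enumeration; otherwise the assembled sequence comes out as $1^{2k+1}0^{k+1}1^k01^{k}\,?^{\,k}$ with the final block unaccounted for.
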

		
	\begin{proof}
		Consider first the case where $j=i+1$.  In this case, $Y_{i,j}=X_i X_{i+1}aa^i$.   Table~\ref{Tab Xk} lists the distinct squares of $Y_{i,j}$ that begin in $X_i$. The squares are listed as sets of consecutive distinct squares of the same length.  For each set, only the first distinct square is listed; the remaining distinct squares are cyclic shifts of the first. Both the existence and distinctness of the listed squares may be easily verified. The sequence $1^{2k+1}0^{k+1}1^k01^{2k}$ and the minimum total of $5k+1$ in $X_i$ both follow from Table~\ref{Tab Xk}.
		
		By Theorem~\ref{Thm run}, $s(X_jaa^{j-1})=s(w_j)=2^j0^{2j}1^{j+1}001^{j}0^{2\floor{\frac{j+1}{2}}(10)^{\floor{\frac{j}{2}}}}$.
		
		\begin{table}
			\begin{center}
				
				\begin{tabular}{l|l|l|l|l}
					first in set & last in set & count & root of first in set & root length\\ 
					\hline
					\hline
					$1$ 		& $2k+1$ & $2k+1$ & $a^{k-1}baa^{k-1}ba^{k-1}ba$ & $3k+2$ \\
					$2k+2$	& $3k+2$ & $k+1$ & NONE & 0\\
					$3k+3$	& $4k+1$ & $k-1$ & $a^{k-1}ba$ & $k+1$\\
					$4k+2$	& $4k+2$ & $1$	& $baa^{k-1}ba^{k-1}$ & $2k+1$\\
					$4k+3$	& $4k+3$ & $1$	& NONE & 0\\
					$4k+4$	& $5k+3$ & $k$ & $a^{k-1}b$ & $k$\\
					$5k+4$	& $6k+3$ & $k$ & $a^{k-1}ba^{k}baa$ & $2k+3$\\
					\end{tabular}
				\caption{Distinct squares of $Y_{i,j}$ that begin in the factor $X_k=a^{k-1}baa^{k-1}ba^{k-1}baa^{k-1}baa^{k-1}ba^{k-1}b$, where $k=j-1$: each row lists a set of consecutive distinct squares of a given length, with the first square of the set beginning at the first position and the last beginning at the last position.  ``Count'' gives the number of distinct squares in the set.  When the root of the square is given as NONE, no distinct square exists at those positions. Here $s(Y_{i,j})=s(\cdots X_kX_jaa^{j-1})=\cdots 1^{2k+1}0^{k+1}1^k01^{2k}s(X_jaa^{j-1})$.}\label{Tab Xk}
			\end{center}
			\end{table}
		
		Now consider the case where $Y_{i,j}$ contains more than two $X_k$ factors. By definition, every $X_k$ has the same structure, and for all $k<j$, every factor $X_k$ is followed by $X_{k+1}aa^{k}$, by the definition of $Y_{i, j}$. Note that when $k+1 < j$, the factor $aa^k$ is a prefix of $X_{k+2}$. Since every $X_k$ has the same structure and is followed by $X_{k+1}aa^{k}$, the squares listed in Table \ref{Tab Xk} will exist in all $X_k$'s.
		
		To see that the squares in each $X_k$ are distinct regardless of how many $X_k$ factors are in $Y_{i,j}$, observe that every distinct square in $X_k$, as listed in Table \ref{Tab Xk}, includes at least one of the factors $ba^{k-1}b$ or $baa^{k-1}b=ba^kb$.
		
		Now consider $X_{k+2}=a^{k+1}baa^{k+1}ba^{k+1}baa^{k+1}baa^{k+1}ba^{k+1}b$.  Neither $ba^{k-1}b$ nor $ba^kb$ appears in $X_{k+2}$, nor will they appear in subsequent $X$'s.  Thus the distinct squares of $X_k$ listed in Table~\ref{Tab Xk} remain distinct no matter how many additional $X$ factors exist in $Y_{i,j}$. Table~\ref{Tab Xk} already accounts for $X_{k+1}$.
		
		Table~\ref{Tab Xk} holds for $Y_{k,k+1}=X_{k}X_{k+1}aa^{k}$. It can be verified that no distinct squares of $X_kX_{k+1}aa^k$ that begin in $X_k$ exist beyond those listed in Table~\ref{Tab Xk}.  If we want additional distinct squares in a $Y_{i,j}$ word, we must add an additional $X_k$ factor (increase the value of $j$), giving us $Y_{i,j}=X_kX_{k+1}X_{k+2}aa^{k+1}$.
		
		We are looking for additional distinct squares that begin in $X_k$.  We know from the case $j=i+1$, shown in Table~\ref{Tab Xk}, that all distinct squares beginning in $X_k$ and ending in $X_k, X_{k+1}$, or the prefix of length $k+1$ of $X_{k+2}$, $aa^k$, are accounted for.  Therefore any additional distinct squares must end beyond the first $k+1$ positions of $X_{k+2}$. We consider two cases.
		
		 First, let us consider the case when additional distinct squares begin from positions 1 through $5k+3$ of $X_k$. (Position~$5k+3$ of $X_k$ is the second-to-last $b$ in $X_k$.) We are looking for distinct squares that end beyond the first $k+1$ positions of $X_{k+2}$, i.e., distinct squares that extend beyond $X_{k+1}$.  Since the length of $X_{k+1}$ is greater than that of $X_k$, less than half of any such square will be in $X_k$.  Therefore, all of the square that falls in $X_k$ must be part of the root of the square.  The root must therefore contain $ba^{k-1}b$, a sequence that is not found at any point beyond $X_k$.  Therefore, no $X_k$, with $k<j$, may have distinct squares in addition to those listed in Table~\ref{Tab Xk} that begin at positions 1 through $5k+3$ of $X_k$.
			
			Next, let us consider the case when additional distinct squares begin from positions $5k+4$ through $6k+3$ of $X_k$. The positions indicated spell the last $a^{k-1}b$ factor of $X_k$. Each of the last $k$ positions of $X_k$ already begins one distinct square of length $4k+6$.  For each of these $k$ positions, let $u$ be the already-established distinct square beginning at that position, and let $U$ be a theoretically longer square beginning at the same position.  We do not need to address potentially shorter distinct squares, since their non-existence is easily verified.
From Lemma~\ref{LemIlie3}, we have $\length{U}+m \leq 2\length{u}$, which gives $\length{U} \leq 8k+12-m$.  Recall that $m$ gives the number of consecutive FS-double-squares of the lengths $\length{u}$ and $\length{U}$.  The minimum value of $m$ is therefore 1, giving $\length{U} \leq 8k+11$.  It can be verified that none of the last $k$ positions of $X_k$ begin a square of length at most $2(8k+11)$.   Since no longer distinct squares may exist at those positions, no additional distinct squares exist.
		\end{proof}

		\begin{theorem}
		\label{Thm density 1}
		There exist words in which the distinct-square-density approaches $\frac{5}{6}$.
		\end{theorem}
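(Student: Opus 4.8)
The plan is to use the family $Y_{i,j}$ itself as the witness: I would compute $\length{Y_{i,j}}$ and the total number of distinct squares of $Y_{i,j}$ in closed form, then fix $i$ and let $j\to\infty$. For the length, a direct letter count gives $\length{X_k}=6k+3$, and I would note that the suffix $X_jaa^{j-1}$ is precisely the word $w_j$ of Theorem~\ref{Thm unique}, since $(a^{j-1}baa^{j-1}ba^{j-1}ba)^2=X_ja$ forces $w_j=X_ja\cdot a^{j-1}=X_jaa^{j-1}$, of length $7j+3$. Summing the block lengths, $\length{Y_{i,j}}=\sum_{k=i}^{j-1}(6k+3)+(7j+3)=3(j^2-i^2)+7j+3$.

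For the distinct-square count I would use that the number of distinct squares of a word equals the sum of the entries of its distinct-square-sequence, so it suffices to add the contributions from the disjoint position ranges occupied by the factors $X_k$ with $k<j$ and by the suffix $X_jaa^{j-1}=w_j$. By Lemma~\ref{Lem Xk}, the distinct-square-sequence of each $X_k$ with $k<j$ inside $Y_{i,j}$ is $1^{2k+1}0^{k+1}1^k01^{2k}$, contributing exactly $5k+1$ distinct squares; Lemma~\ref{Lem Xk} is exactly what guarantees that these really are the last occurrences beginning in $X_k$ and that they do not recur in later blocks. For the suffix, since $w_j$ is a suffix of $Y_{i,j}$, any square whose last occurrence in $Y_{i,j}$ begins inside $w_j$ has that same last occurrence in $w_j$ taken on its own, so this range contributes the sum of the entries of $s(w_j)$, which by Theorem~\ref{Thm run} is $4.5j+1$ for even $j$ and $4.5j+0.5$ for odd $j$. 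Adding up, the total number of distinct squares of $Y_{i,j}$ is $\sum_{k=i}^{j-1}(5k+1)+\bigl(4.5j+O(1)\bigr)=\tfrac{5}{2}j^2+O(j)$, the hidden constants depending only on the fixed value of $i$.

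Dividing gives $d(Y_{i,j})=\dfrac{\tfrac{5}{2}j^2+O(j)}{3j^2+7j+3-3i^2}$; fixing $i$ (for instance $i=1$) and letting $j\to\infty$, the numerator is asymptotic to $\tfrac{5}{2}j^2$ and the denominator to $3j^2$, so $d(Y_{1,j})\to\tfrac{5/2}{3}=\tfrac{5}{6}$, which proves the theorem. The genuinely hard content — that each $X_k$ contributes exactly $5k+1$ distinct squares with no loss or duplication across block boundaries — is already supplied by Lemma~\ref{Lem Xk}, so the only thing to watch in the theorem itself is the bookkeeping: recognizing that the trailing $aa^{j-1}$ merges with $X_j$ into the already-analyzed word $w_j$ rather than an unanalyzed factor, and then carrying out the two quadratic summations correctly. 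I do not expect any step beyond that to pose an obstacle.
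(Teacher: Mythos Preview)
Your proposal is correct and follows essentially the same approach as the paper: build $Y_{i,j}$, identify the suffix $X_jaa^{j-1}$ with $w_j$, compute $\length{Y_{i,j}}=3(j^2-i^2)+7j+3$, count distinct squares as $\sum_{k=i}^{j-1}(5k+1)$ from Lemma~\ref{Lem Xk} plus $4j+\lfloor j/2\rfloor+1$ from Theorem~\ref{Thm run}, and let $j\to\infty$. Your explicit suffix argument for why the $w_j$-range of $s(Y_{i,j})$ coincides with $s(w_j)$ is a welcome clarification that the paper leaves implicit; otherwise the two proofs are the same, with the paper recording the exact closed form where you use $O(j)$.
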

		\begin{proof}
		Such a word can be constructed as follows. As discussed in Lemma~\ref{Lem Xk}, let $$X_k=a^{k-1}baa^{k-1}ba^{k-1}baa^{k-1}baa^{k-1}ba^{k-1}b$$ of length $6k+3$ and for $i<j$, let 
$Y_{i,j}=X_i X_{i+1}\cdots X_{j-1} X_jaa^{j-1}.$ 
Essentially, $Y_{i,j}$ is the concatenation of $j-i+1$ words $w_m$, with increasing $m$ values, in which the suffix and prefix of $a$'s are shared by adjacent pairs of words. Referring to Theorem~\ref{Thm unique}, the factor $X_jaa^{j-1}$ of $Y_{i,j}$ is the word $w_j$ of length $7j+3$, thus the word $Y_{i,j}$ has length $7j+3+\sum_{k=i}^{j-1}(6k+3)=7j+3+3j^2-3i^2$.
				
			By Theorem~\ref{Thm run}, the factor $X_jaa^{j-1}$ has $4j+\floor{\frac{j}{2}}+1$ distinct squares. By Lemma~\ref{Lem Xk}, 
			\begin{center}
			$\begin{array}{rcl}
			s(Y_{i,j})&=&s(Y_{i,j})[1..6i+3]s(Y_{i,j})[6i+4..12i+12] \cdots\\
			&&\hspace*{1in} s(Y_{i,j})[3j^2-3i^2-6j-2..3j^2-3i^2]s(X_jaa^{j-1})\\
			&=& 1^{2i+1}0^{i+1}1^i01^{2i} 1^{2(i+1)+1}0^{(i+1)+1}1^{i+1}01^{2(i+1)} \cdots\\
			&&\hspace*{1in} 1^{2(j-1)+1}0^{(j-1)+1}1^{j-1}01^{2(j-1)} s(w_j)\\
			&=&\prod_{k=i}^{j-1}(1^{2k+1}0^{k+1}1^k01^{2k})s(w_j).
			\end{array}$
				\end{center}
							
			 The word $Y_{i,j}$ has a total of
			\begin{center}
			$4j+\floor{\frac{j}{2}}+1+\sum_{k=i}^{j-1}(5k+1)=4j+\floor{\frac{j}{2}}+1+\frac{1}{2}(5j^2-3j-5i^2+3i)$
			\end{center}
			distinct squares. It has thus distinct-square-density $$\frac{4j+\floor{\frac{j}{2}}+1+\frac{1}{2}(5j^2-3j-5i^2+3i)}{7j+3+3j^2-3i^2},$$ which, recalling that $j>i$, approaches $\frac{5}{6}$ as $j$ approaches infinity.
		\end{proof}

	\begin{table}
		\begin{center}
			
			\begin{tabular}{l|l|l|l|l}
				$i$ & $j$ & distinct squares & length & distinct-square-density \\ 
				\hline
				\hline
				1  & 2  & 16   & 26    & .615 \\
				1  & 3  & 31   & 48    & .646 \\
				2  & 4  & 46   & 67    & .687 \\
				2  & 5  & 71   & 101   & .703 \\
				5  & 15  & 553  & 708   & .781 \\
				6  & 19  & 879  & 1111  & .791 \\
				8  & 25  & 1490 & 1861  & .801 \\
				11 & 36 & 3063 & 3780  & .810 \\
				19 & 64 & 9559 & 11656 & .820 \\
				\end{tabular}\caption{Densities of selected $Y_{i,j}$ words: note that both the $i$ and $j$ columns skip values.  For each $j$ value, the given $i$ value gives the maximum distinct-square-density for that $j$.}\label{Tab Y density alt}
		\end{center}
		
	\end{table}

		Referring to Table~\ref{Tab Y density alt}, we include an example of an $Y_{i,j}$ word illustrating Theorem~\ref{Thm density 1}, where $i=5$ and $j=15$; there are 553 distinct squares, the length is 708, the distinct-square-density is $\approx .781$:
\begin{center}
$\begin{array}{cccccccccccccccccccccccccc}
a&a&a&a&b&a&a&a&a&a&b&a&a&a&a&b&a&a&a&a&a&b&a&a&a&a\\
1&1&1&1&1&1&1&1&1&1&1&0&0&0&0&0&0&1&1&1&1&1&0&1&1&1\\ 
&&&&&&&&&&&&&&&&&&&&&&&&&\\
a&b&a&a&a&a&b&a&a&a&a&a&b&a&a&a&a&a&a&b&a&a&a&a&a&b\\
1&1&1&1&1&1&1&1&1&1&1&1&1&1&1&1&1&1&1&1&0&0&0&0&0&0\\ 
\end{array}
$
\end{center}

\begin{center}
$\begin{array}{cccccccccccccccccccccccccc}
a&a&a&a&a&a&b&a&a&a&a&a&a&b&a&a&a&a&a&b&a&a&a&a&a&a\\
0&1&1&1&1&1&1&0&1&1&1&1&1&1&1&1&1&1&1&1&1&1&1&1&1&1\\ 
&&&&&&&&&&&&&&&&&&&&&&&&&\\
b&a&a&a&a&a&a&a&b&a&a&a&a&a&a&b&a&a&a&a&a&a&a&b&a&a\\
1&1&1&1&1&1&1&1&1&0&0&0&0&0&0&0&0&1&1&1&1&1&1&1&0&1\\ 
&&&&&&&&&&&&&&&&&&&&&&&&&\\
a&a&a&a&a&b&a&a&a&a&a&a&b&a&a&a&a&a&a&a&b&a&a&a&a&a\\
1&1&1&1&1&1&1&1&1&1&1&1&1&1&1&1&1&1&1&1&1&1&1&1&1&1\\ 
&&&&&&&&&&&&&&&&&&&&&&&&&\\
a&a&a&b&a&a&a&a&a&a&a&b&a&a&a&a&a&a&a&a&b&a&a&a&a&a\\
1&1&1&1&0&0&0&0&0&0&0&0&0&1&1&1&1&1&1&1&1&0&1&1&1&1\\ 
&&&&&&&&&&&&&&&&&&&&&&&&&\\
a&a&a&b&a&a&a&a&a&a&a&b&a&a&a&a&a&a&a&a&b&a&a&a&a&a\\
1&1&1&1&1&1&1&1&1&1&1&1&1&1&1&1&1&1&1&1&1&1&1&1&1&1\\ 
&&&&&&&&&&&&&&&&&&&&&&&&&\\
a&a&a&a&b&a&a&a&a&a&a&a&a&b&a&a&a&a&a&a&a&a&a&b&a&a\\
1&1&1&1&1&0&0&0&0&0&0&0&0&0&0&1&1&1&1&1&1&1&1&1&0&1\\ 
&&&&&&&&&&&&&&&&&&&&&&&&&\\
a&a&a&a&a&a&a&b&a&a&a&a&a&a&a&a&b&a&a&a&a&a&a&a&a&\\
1&1&1&1&1&1&1&1&1&1&1&1&1&1&1&1&1&1&1&1&1&1&1&1&1&\\ 
&&&&&&&&&&&&&&&&&&&&&&&&&\\
a&b&a&a&a&a&a&a&a&a&a&a&b&a&a&a&a&a&a&a&a&a&b&a&a&\\
1&1&1&1&1&1&1&1&1&1&1&1&1&0&0&0&0&0&0&0&0&0&0&0&1&\\ 
&&&&&&&&&&&&&&&&&&&&&&&&&\\
a&a&a&a&a&a&a&a&b&a&a&a&a&a&a&a&a&a&a&b&a&a&a&a&a&\\
1&1&1&1&1&1&1&1&1&0&1&1&1&1&1&1&1&1&1&1&1&1&1&1&1&\\ 
&&&&&&&&&&&&&&&&&&&&&&&&&\\
a&a&a&a&b&a&a&a&a&a&a&a&a&a&a&b&a&a&a&a&a&a&a&a&a&\\
1&1&1&1&1&1&1&1&1&1&1&1&1&1&1&1&1&1&1&1&1&1&1&1&1&\\ 
&&&&&&&&&&&&&&&&&&&&&&&&&\\
a&a&b&a&a&a&a&a&a&a&a&a&a&b&a&a&a&a&a&a&a&a&a&a&a&\\
1&1&1&0&0&0&0&0&0&0&0&0&0&0&0&1&1&1&1&1&1&1&1&1&1&\\ 
&&&&&&&&&&&&&&&&&&&&&&&&&\\
\end{array}$
\end{center}

\begin{center}
$\begin{array}{cccccccccccccccccccccccccc}
b&a&a&a&a&a&a&a&a&a&a&a&b&a&a&a&a&a&a&a&a&a&a&b&a&\\
1&0&1&1&1&1&1&1&1&1&1&1&1&1&1&1&1&1&1&1&1&1&1&1&1&\\ 
&&&&&&&&&&&&&&&&&&&&&&&&&\\
a&a&a&a&a&a&a&a&a&a&b&a&a&a&a&a&a&a&a&a&a&a&a&b&a&\\
1&1&1&1&1&1&1&1&1&1&1&1&1&1&1&1&1&1&1&1&1&1&1&1&0&\\ 
&&&&&&&&&&&&&&&&&&&&&&&&&\\
a&a&a&a&a&a&a&a&a&a&b&a&a&a&a&a&a&a&a&a&a&a&a&b&a&\\
0&0&0&0&0&0&0&0&0&0&0&0&1&1&1&1&1&1&1&1&1&1&1&1&0&\\ 
&&&&&&&&&&&&&&&&&&&&&&&&&\\
a&a&a&a&a&a&a&a&a&a&a&b&a&a&a&a&a&a&a&a&a&a&a&b&a&\\
1&1&1&1&1&1&1&1&1&1&1&1&1&1&1&1&1&1&1&1&1&1&1&1&1&\\ 
&&&&&&&&&&&&&&&&&&&&&&&&&\\
a&a&a&a&a&a&a&a&a&a&a&b&a&a&a&a&a&a&a&a&a&a&a&a&a&\\
1&1&1&1&1&1&1&1&1&1&1&1&1&1&1&1&1&1&1&1&1&1&1&1&1&\\ 
&&&&&&&&&&&&&&&&&&&&&&&&&\\
b&a&a&a&a&a&a&a&a&a&a&a&a&b&a&a&a&a&a&a&a&a&a&a&a&\\
1&0&0&0&0&0&0&0&0&0&0&0&0&0&0&1&1&1&1&1&1&1&1&1&1&\\ 
&&&&&&&&&&&&&&&&&&&&&&&&&\\
a&a&b&a&a&a&a&a&a&a&a&a&a&a&a&a&b&a&a&a&a&a&a&a&a&\\
1&1&1&0&1&1&1&1&1&1&1&1&1&1&1&1&1&1&1&1&1&1&1&1&1&\\ 
&&&&&&&&&&&&&&&&&&&&&&&&&\\
a&a&a&a&b&a&a&a&a&a&a&a&a&a&a&a&a&a&b&a&a&a&a&a&a&\\
1&1&1&1&1&1&1&1&1&1&1&1&1&1&1&1&1&1&1&1&1&1&1&1&1&\\ 
&&&&&&&&&&&&&&&&&&&&&&&&&\\
a&a&a&a&a&a&a&a&b&a&a&a&a&a&a&a&a&a&a&a&a&a&b&a&a&\\
1&1&1&1&1&1&1&1&1&0&0&0&0&0&0&0&0&0&0&0&0&0&0&0&1&\\ 
&&&&&&&&&&&&&&&&&&&&&&&&&\\
a&a&a&a&a&a&a&a&a&a&a&a&b&a&a&a&a&a&a&a&a&a&a&a&a&\\
1&1&1&1&1&1&1&1&1&1&1&1&1&0&1&1&1&1&1&1&1&1&1&1&1&\\ 
&&&&&&&&&&&&&&&&&&&&&&&&&\\
a&a&b&a&a&a&a&a&a&a&a&a&a&a&a&a&b&a&a&a&a&a&a&a&a&\\
1&1&1&1&1&1&1&1&1&1&1&1&1&1&1&1&1&2&2&2&2&2&2&2&2&\\ 
&&&&&&&&&&&&&&&&&&&&&&&&&\\
a&a&a&a&a&a&b&a&a&a&a&a&a&a&a&a&a&a&a&a&a&a&b&a&a&\\
2&2&2&2&2&2&2&0&0&0&0&0&0&0&0&0&0&0&0&0&0&0&0&0&0&\\ 
&&&&&&&&&&&&&&&&&&&&&&&&&\\
a&a&a&a&a&a&a&a&a&a&a&a&b&a&a&a&a&a&a&a&a&a&a&a&a&\\
0&0&0&0&0&0&0&0&0&0&0&0&1&1&1&1&1&1&1&1&1&1&1&1&1&\\ 
&&&&&&&&&&&&&&&&&&&&&&&&&\\
a&a&a&b&a&a&a&a&a&a&a&a&a&a&a&a&a&a&a&b&a&a&a&a&a&\\
1&1&1&0&0&1&1&1&1&1&1&1&1&1&1&1&1&1&1&1&0&0&0&0&0&\\ 
&&&&&&&&&&&&&&&&&&&&&&&&&\\
a&a&a&a&a&a&a&a&a&b&a&a&a&a&a&a&a&a&a&a&a&a&a&a&a&\\
0&0&0&0&0&0&0&0&0&0&0&1&0&1&0&1&0&1&0&1&0&1&0&1&0& 
\end{array}$
\end{center}

Note that the above word does not have many FS-double-squares, and those it does have are not at the beginning.  Words with distinct-square-density greater than $.8$ first occur when $j=25$ and are well over 1,000 letters long.

\section{Selfish 2's, or not}

In all the words we have given thus far, each run of 2's in the corresponding distinct-square-sequence is followed by a run of at least twice as many 0's.  We refer to a run of 2's followed by at least twice as many 0's as {\em selfish 2's}.

However, not all runs of 2's are selfish. The most straightforward way to break the selfish rule is to omit or alter the last letter of the word $w_m$, so that the position that would be the last 2 is instead a 1.  For example consider the word $w_2=abaababaabaababaa$, which has the distinct-square-sequence $22000011100110010$.  The Selfish 2's rule appears to hold, but it can be broken simply by omitting the last letter, giving $abaababaabaababa$, which has the distinct-square-sequence $210000111011100$, or by changing the last letter of $w_2$ to $b$, giving the distinct-square-sequence $21000011101011000$.

Similar results are seen with $w_3$.  Omitting the last letter gives sequence $22100000011110011100010$ and distinct-square-density $\frac{13}{23} \approx .565$, and altering the last letter gives the distinct-square-sequence $221000000111100011100100$ and distinct-square-density $\frac{13}{24} \approx .542$.

For the above alterations to $w_2$ and $w_3$, the Selfish 2's rule very nearly holds; we have replaced a 2 with a 1, but the length of the first run of 0's remains unchanged. Greater breaks from the Selfish 2's pattern can be obtained by increasing the values of $e_1, e_2$, or both, as in the following examples.

	The distinct-square-sequence of 
	$(aba, ab, 2, 1)a$ is $22011000100011100110010$, with distinct-square-density $\approx .565$; 
	for $(aba, ab, 3, 1)a$, it is $22011011000011100011100110010$, with distinct-square-density $ \approx .586$; 
	for $(aba, ab, 3, 2)a$, it is $22011000000100011100001110111100010$, with distinct-square-density $=.514$; 
	for $(aaba, aab, 2, 1)aa$, it is $22201110000110000111100111000010$, with distinct-square-density $ \approx .594$. 
	
	The distinct square-sequence of $(aaba, aab, 2, 2)$ is $21100010000001111000011120011110001000$, with distinct-square-density $=.5$. Note that this word contains the word $(aba, a, 1, 1)$, which  does follow the Selfish 2's rule. The word $(aaba, aab, 2, 2)$ has an internal 2 which disappears when $aa$ is added, i.e., 	   
 $(aaba, aab, 2, 2)aa$ has distinct-square-sequence $2220000000000111100000111101111110000010$, with distinct-square-density $.525$  (the Selfish 2's rule applies here).  
	
	The distinct-square-sequence of $(aaba, aab, 3, 1)$ is
	$21101110111000100111100001111001101000$, with distinct-square-density $ \approx .579$. Adding one $a$ to this word gives 
	a distinct-square-sequence of $221011101110000001111000011110011100010$, with distinct-square-density $ \approx .590$. Adding another $a$ gives $2220111011100000011110000111100111000010$, with distinct-square-density $= .6 $. Note that in this case, the Selfish 2's rule does not apply even when the sequence begins with multiple 2's.
	
	The distinct-square-sequence of $(aaba, aab, 3, 2)$ is
	$211011100010000110000111100001112001111$ $0001000,$ with distinct-square-density $ \approx .523$. 
	Note that this word again contains the word $(aba, a, 1, 1)$, which does follow the Selfish 2's rule.
	Adding another $a$ to the end of the above word produces another leading 2, but causes the interior 2 to vanish, giving  
	$221011100000000110000111100001111101111100$ $00010$ with distinct-square-density $\approx .532$. The sequence can be extended to begin with three 2's by adding yet another $a$ giving $222011100000000110000111100000111101111110000010$, with distinct-square-density $ \approx  .542$.

Apart from $(aaba, aab, 2, 2)aa$, in the distinct-square-sequences of all the above examples, each run of 2's is still followed by a larger run of 0's; the difference is that the 0's are not necessarily adjacent to the 2's, and the leading 0 in the run of 0's is not necessarily the first 0 to follow the 2's.

\section{Conclusion and future work}		

In this paper, we first studied how to maximize runs of FS-double-squares in the prefix. We showed that a result of Ilie~\cite{Ili07}, which  considers the case when the lengths of squares in a run of 2's are preserved, implies that for a given positive integer $m$, the minimum length of a word beginning with $m$ FS-double-squares, whose lengths are preserved, is $7m+3$. In Theorem~\ref{Thm unique}, we constructed a word $w_m$ of length $7m+3$ that begins with $m$ FS-double-squares, whose lengths are preserved, and analyzed in Theorem~\ref{Thm run} the distinct-square-sequence as well as the distinct-square-density of $w_m$. We then generalized our construction in Theorems~\ref{Thm larger alphabet} and \ref{Thm Z words}. In Theorem~\ref{Thm density 1}, we constructed for each pair of integers $(i, j)$ with $i<j$, a word $Y_{i,j}$ in which the distinct-square-density approaches $5/6$ as $j$ approaches infinity. 

	 Deza et al. \cite{DeFrTh} gives $\lfloor \frac{5n}{6} \rfloor$ as the maximum number of FS-double-squares in a word of length $n$, and we may wonder about a connection between that result and our Theorem \ref{Thm density 1}.  	Deza et al.'s result gives an upper bound on the number of FS-double-square positions in a word; we give a pattern for a word that will have close to $\lfloor \frac{5n}{6} \rfloor$ total squares, counting both double and single-square-positions.  In fact, of all the distinct squares in our word $Y_{i,j}$, only a trivial number are FS-double-squares.  Deza et al.'s proof, on the other hand, is concerned entirely with FS-double-squares and says nothing about single distinct square occurrences.  While there may be some underlying property that leads to the value $\frac{5}{6}$ occurring in both results, neither our proof nor Deza et al.'s incorporates part of the other.
	
 We proved that the upper bound for the number of distinct squares in a word of length $n$ is at least a value approaching $\floor{\frac{5n}{6}}.$  We did so by finding a pattern for a word that when $n$ is sufficiently large, will have a distinct-square-density approaching $\frac{5}{6}$. We suspect $\lfloor \frac{5n}{6} \rfloor$ either is the upper bound or is very close to it.  However, the pattern we found approaches the distinct-square-density $\frac{5}{6}$ only for words that are thousands of letters long or more; our intuition is that there exist shorter words which approach the $\frac{5}{6}$ bound, and that finding them could be a fruitful area for future research.
	
	We also observed that many words have selfish 2's, where a run of FS-double-square-positions is followed by a longer run of positions with no distinct squares.  We disproved our first Selfish 2's hypothesis--that any run of 2's must be followed by a run of at least twice that many 0's--but we suspect that a weaker version of our Selfish 2's hypothesis is true. Proving any Selfish 2's hypothesis would put a maximum on the upper bound of distinct-square-density. With these observations in mind, we propose a weaker version of the Selfish 2's rule:  For every 2 in the distinct-square-sequence of a word, at least one 0 must exist to the right of that 2. If this rule is true, then the upper limit on the number of distinct squares in a word of length $n$ must be less than $n$ or the distinct-square-density can never be more than 1.  We suspect this is true in part because our $Y_{i,j}$ words get the vast majority of their distinct square occurrences from single rather than double-squares. Stronger versions of the rule, that require more than one 0 to follow each 2, would lead to correspondingly lower upper limits. 
	
Referring to Table~\ref{Tab Y density alt}, we suggest the problem of finding and proving the $i$ value that gives the maximum distinct-square-density for any given $j$.

The distinct-square-sequences were calculated using a program that we wrote in Java to support this paper that, given a word, outputs the associated distinct-square-sequence, the total number of distinct squares in the word, the length of the word, and the distinct-square-density of the word.  In addition to accepting typed words as input, the program also creates $Y_{i,j}$ words given $i$ and $j$ values, or creates words of the form $(v_1^{e_1}v_2v_2^{e_2})^2$ when given values for $v_1, v_2, e_1$, and $e_2$.  The words created according to those criteria then have distinct-square-sequences calculated in an identical manner to typed-in words. Distinct-square-density was calculated in Java as a 64-bit signed floating point value (Java's double type). Densities were rounded to three decimal places for convenience. Distinct-square-density values in Table~\ref{Tab Y density alt} were calculated in Microsoft Excel using the formulas given in Theorem~\ref{Thm density 1}.

\nocite{*}
\bibliographystyle{eptcs}

\begin{thebibliography}{10}

\bibitem{ApPr}
A.~Apostolico \& F.~P. Preparata (1983):
\newblock {\em Optimal off-line detection of repetitions in a string.}
\newblock {\em Theoretical Computer Science}, 22(3), pp. 297--315, 
\doi{10.1016/0304-3975(83)90109-3}.

\bibitem{BSBoNiQuZh}
F.~Blanchet-Sadri, M.~Bodnar, J.~Nikkel, J.~D. Quigley \& X.~Zhang (2015):
\newblock {\em Squares and primitivity in partial words.}
\newblock {\em Discrete Applied Mathematics}, 185, pp. 26--37, \doi{10.1016/j.dam.2014.12.003}.

\bibitem{BSJiMaQuZh}
F.~Blanchet-Sadri, Y.~Jiao, J.~Machacek, J.~D. Quigley \& X.~Zhang (2014):
\newblock {\em Squares in partial words.}
\newblock {\em Theoretical Computer Science}, 530, pp. 42--57, \doi{10.1016/j.tcs.2014.02.023}.

\bibitem{BSLaNiQuZh}
F.~Blanchet-Sadri, J.~Lazarow, J.~Nikkel, J.~D. Quigley \& X.~Zhang (to appear):
\newblock {\em Computing primitively-rooted squares and runs in partial words.}
\newblock {\em European Journal of Combinatorics}.
	
	\bibitem{BSMe1}
F.~Blanchet-Sadri \& R.~Merca\c{s} (2009):
\newblock {\em A note on the number of squares in a partial word with one hole.}
\newblock {\em RAIRO-Theoretical Informatics and Applications}, 43, pp. 767--774,
 \doi{10.1051/ita/2009019}.

\bibitem{BSMe2}
F.~Blanchet-Sadri \& R.~Merca\c{s} (2012):
\newblock {\em The three-squares lemma for partial words with one hole.}
\newblock {\em {T}heoretical {C}omputer {S}cience}, 428, pp. 1--9,
\doi{10.1016/j.tcs.2012.01.012}.

\bibitem{BSMeSc2}
F.~Blanchet-Sadri, R.~Merca\c{s} \& G.~Scott (2009):
\newblock {\em Counting distinct squares in partial words.}
\newblock {\em Acta Cybernetica}, 19, pp. 465--477.

\bibitem{BSNiQuZhConf}
F.~Blanchet-Sadri, J.~Nikkel, J.~D. Quigley \& X.~Zhang (2015):
\newblock {\em Computing primitively-rooted squares and runs in partial words.}
\newblock In J.~Kratochv\'{\i}l, M.~Miller \& D.~Froncek, editors: {\em IWOCA
  2014, 25th International Workshop on Combinatorial Algorithms}, 
  {\em Lecture Notes in Computer Science} 8986, Springer
  International Publishing Switzerland, pp. 86--97, \doi{10.1007/978-3-319-19315-1\_8}.

\bibitem{Cro81}
M.~Crochemore (1981):
\newblock {\em An optimal algorithm for computing the repetitions in a string.}
\newblock {\em Information Processing Letters}, 12(5), pp. 244--250.

\bibitem{CrHaLe}
M.~Crochemore, C.~Hancart \& T.~Lecroq (2007):
\newblock {\em Algorithms on Strings}.
\newblock Cambridge University Press, \doi{10.1017/CBO9780511546853}.

\bibitem{CrIlRy}
M.~Crochemore, L.~Ilie \& W.~Rytter (2009):
\newblock {\em Repetitions in strings: {A}lgorithms and combinatorics.}
\newblock {\em Theoretical Computer Science}, 410, pp. 5227--5235, 
\doi{10.1016/j.tcs.2009.08.024}.

\bibitem{CrIlKuRaRyWa}
M.~Crochemore, C.~S. Iliopoulos, M.~Kubica, J.~Radoszewski, W.~Rytter \&
  T.~Wale\'{n} (2014):
\newblock {\em Extracting powers and periods in a word from its run structure.}
\newblock {\em Theoretical Computer Science}, 521, pp. 29--41, 
\doi{10.1016/j.tcs.2013.11.018}.

\bibitem{CrPe91}
M.~Crochemore \& D.~Perrin (1991):
\newblock {\em Two-way string matching.}
\newblock {\em Journal of the Association for Computing Machinery},
  38(3), pp. 651--675, \doi{10.1145/116825.116845}.

\bibitem{CrRy95}
M.~Crochemore \& W.~Rytter (1995):
\newblock {\em Squares, cubes and time-space efficient string searching.}
\newblock {\em Algorithmica}, 13, pp. 405--425, \doi{10.1007/BF01190846}.

\bibitem{CrRy03}
M.~Crochemore \& W.~Rytter (2003):
\newblock {\em Jewels of Stringology}.
\newblock World Scientific, Singapore, \doi{10.1142/4838}.

\bibitem{DeFr}
A.~Deza \& F.~Franek (2014):
\newblock {\em A d-step approach to the maximum number of distinct squares and runs
  in strings.}
\newblock {\em Discrete Applied Mathematics}, 163, pp. 268--274, \doi{10.1016/j.dam.2013.10.021}.

\bibitem{DeFrTh}
A.~Deza, F.~Franek \& A.~Thierry (2015):
\newblock {\em How many double squares can a string contain?}
\newblock {\em Discrete Applied Mathematics}, 180, pp. 52--69, \doi{10.1016/j.dam.2014.08.016}.

\bibitem{FaPuSmTu}
K.~Fan, S.~J. Puglisi, W.~F. Smyth \& A.~Turpin (2006):
\newblock {\em A new periodicity lemma.}
\newblock {\em SIAM Journal on Discrete Mathematics}, 20, pp. 656--668, \doi{10.1137/050630180}.

\bibitem{FrSi95}
A.~S. Fraenkel \& R.~J. Simpson (1995):
\newblock {\em How many squares must a binary sequence contain?}
\newblock {\em Electronic Journal of Combinatorics}, 2, pp. R2.

\bibitem{FrSi98}
A.~S. Fraenkel \& R.~J. Simpson (1998):
\newblock {\em How many squares can a string contain?}
\newblock {\em Journal of Combinatorial Theory, Series A}, 82(1), pp. 112--120,
  \doi{10.1006/jcta.1997.2843}.



\bibitem{FFSS}
F.~Franek, R.~C.~G. Fuller, J.~Simpson \& W.~F. Smyth (2012):
\newblock {\em More results on overlapping squares.}
\newblock {\em Journal of Discrete Algorithms}, 17, pp. 2--8, \doi{10.1016/j.jda.2012.03.003}.

\bibitem{Gusbook}
D.~Gusfield (1997):
\newblock {\em Algorithms on Strings, Trees and Sequences}.
\newblock Cambridge University Press, Cambridge, \doi{10.1017/CBO9780511574931}.

\bibitem{GuSt}
D.~Gusfield \& J.~Stoye (2004):
\newblock {\em Linear time algorithms for finding and representing all the tandem
  repeats in a string.}
\newblock {\em Journal of Computer and System Sciences}, 69(4), pp. 525--546, 
\doi{10.1016/j.jcss.2004.03.004}.

\bibitem{HaHaKa092}
V.~Halava, T.~Harju \& T.~K\"{a}rki (2010):
\newblock {\em On the number of squares in partial words.}
\newblock {\em RAIRO-Theoretical Informatics and Applications}, 44(1), pp. 125--138,
  \doi{10.1051/ita/2010008}.

\bibitem{HaKaNo}
T.~Harju, T.~K\"{a}rki \& D.~Nowotka (2011):
\newblock {\em The number of positions starting a square in binary words.}
\newblock {\em Electronic Journal of Combinatorics}, 18:P6.

\bibitem{Ili05}
L.~Ilie (2005):
\newblock {\em A simple proof that a word of length $n$ has at most $2n$ distinct
  squares.}
\newblock {\em Journal of Combinatorial Theory, Series A}, 112(1), pp. 163--164,
  \doi{10.1016/j.jcta.2005.01.006}.

\bibitem{Ili07}
L.~Ilie (2007):
\newblock {\em A note on the number of squares in a word.}
\newblock {\em Theoretical Computer Science}, 380(3), pp. 373--376, 
\doi{10.1016/j.tcs.2007.03.025}.


\bibitem{JoMaSe}
N.~ Jonoska, F.~ Manea \& S. Seki (2014):
\newblock {\em A stronger square conjecture on binary words.}
\newblock In V.~Geffert, B.~Preneel, B.~Rovan, J.~Stuller \& A Min Tjoa, editors: {\em SOFSEM 2014, 40th International Conference on Current Trends in Theory and Practice of Computer Science},
{\em Lecture Notes in Computer Science} 8327, Springer, pp. 339--350, 
\doi{10.1007/978-3-319-04298-5\_30}.
  
\bibitem{KoRaRyWa}
T.~Kociumaka, J.~Radoszewski, W.~Rytter \& T.~Wale\'{n} (2014):
\newblock {\em Maximum number of distinct and nonequivalent nonstandard squares in a
  word.}
\newblock In A.~M. Shur and M.~V. Volkov, editors: {\em DLT 2014, 18th
  International Conference on Developments in Language Theory},
  {\em Lecture Notes in Computer Science} 8633,
  Springer International Publishing Switzerland,  pp. 215--226, 
\doi{10.1007/978-3-319-09698-8\_19}.

\bibitem{KoKu00}
 R.~Kolpakov \& G.~Kucherov (2000):
 \newblock {\em On maximal repetitions in words.}
 \newblock {\em Journal on Discrete Algorithms}, 1, pp. 159--186.
 
\bibitem{KoSm}
E.~Kopylova \& W.~F. Smyth (2012):
\newblock {\em The three squares lemma revisited.}
\newblock {\em Journal of Discrete Algorithms}, 11, pp. 3--14, \doi{10.1016/j.jda.2011.03.009}.

\bibitem{KRRW13}
M.~Kubica, J.~Radoszewski, W.~Rytter \& T.~Wale\'{n} (2013):
\newblock {\em On the maximum number of cubic subwords in a word.}
\newblock {\em European Journal of Combinatorics}, 34, pp. 27--37, 
\doi{10.1016/j.ejc.2012.07.012}.

\bibitem{KuOcRa}
G.~Kucherov, P.~Ochem \& M.~Rao (2003):
\newblock {\em How many square occurrences must a binary sequence contain?}
\newblock {\em Electronic Journal of Combinatorics}, 10:R12.

\bibitem{LeZi}
A.~Lempel \& J.~Ziv (1978):
\newblock {\em Compression of individual sequences via variable-rate coding.}
\newblock {\em IEEE Transactions on Information Theory}, 24, pp. 530--536, 
\doi{10.1109/TIT.1978.1055934}.

\bibitem{Lot97}
 M.~Lothaire (1997):
 \newblock {\em Combinatorics on Words}.
 \newblock Cambridge University Press, Cambridge, \doi{10.1017/CBO9780511566097}.

\bibitem{Lot05}
M.~Lothaire (2005):
\newblock {\em Applied Combinatorics on Words}.
\newblock Cambridge University Press, Cambridge, \doi{10.1017/CBO9781107341005}.

\bibitem{MaLo}
M.~G. Main \& R.~J. Lorentz (1984):
\newblock {\em An {O}(nlog n) algorithm for finding all repetitions in a string.}
\newblock {\em Journal of Algorithms}, 5(3), pp. 422--432, \doi{10.1016/0196-6774(84)90021-X}.

\bibitem{MaSe}
F.~ Manea \& S.~Seki (2015):
\newblock {\em Square-density increasing mappings.}
\newblock In F.~Manea and D.~Nowotka, editors: {\em WORDS 2015, 10th International Conference on Combinatorics on Words}, {\em Lecture Notes in Computer Science} 9304, Springer, 
pp. 160--169, \doi{10.1007/978-3-319-23660-5\_14}.
  

\bibitem{OcRa}
P.~Ochem \& M.~Rao (2008):
\newblock {\em Minimum frequencies of occurrences of squares and letters in infinite
  words.}
\newblock In {\em JM 2008, 12i\`{e}mes Journ\'{e}es Montoises d'Informatique
  Th\'{e}orique, Mons, Belgium}.

\bibitem{Pevbook}
P.~A. Pevzner (2000):
\newblock {\em Computational Molecular Biology An Algorithmic Approach}.
\newblock The MIT Press, Cambridge, MA.


\end{thebibliography}

\providecommand{\urlalt}[2]{\href{#1}{#2}} 
\providecommand{\doi}[1]{doi:\urlalt{http://dx.doi.org/#1}{#1}}

\end{document}